\newtheorem{theorem}{Theorem}[section]
\newtheorem{proposition}[theorem]{Proposition}
\newtheorem{definition}[theorem]{Definition}
\newtheorem{lemma}[theorem]{Lemma}
\newtheorem{corollary}[theorem]{Corollary}
\newcommand{\refp}[1]{(\ref{#1})}
\newcommand{\bs}[1]{\boldsymbol{#1}}
\newcommand{\ud}[1]{\, \mathrm{d}#1}
\newcommand{\IND}[1] {{ \mathds{1}_{ #1 }} }
\newcommand{\la}{\langle}
\newcommand{\ra}{\rangle}
\newcommand{\R}{\mathbb{R}}
\renewcommand{\P}{ \mathbb P }
\newcommand{\spec}{\text{Spec}}
\renewcommand{\Re}{\text{Re}}
\newcommand{\Dom}{\text{Dom}}
\renewcommand{\L}{\mathcal{L}}
\newcommand{\A}{\mathcal{A}}
\newcommand{\K}{\mathcal{K}}
\newcommand{\F}{\mathcal{F}}
\newcommand{\C}{\mathcal{C}}
\newcommand{\Pe}{\mathsf{P}}
\newcommand{\Q}{\mathsf{Q}}
\newcommand{\rcrit}{r_{\text{crit}}}
\newcommand{\bphi}{\boldsymbol{\phi}}
\newcommand{\e}{\boldsymbol{e}}
\newcommand{\vp}{\varphi}
\renewcommand{\a}{\alpha}
\renewcommand{\b}{\beta}
\newcommand{\nuu}{\nu_1}
\renewcommand{\geq}{\geqslant}
\renewcommand{\leq}{\leqslant}
\renewcommand{\d}{\partial}
\newcommand{\derivAD}[3][]{\frac{\ud^{#1} \hspace{-0.3mm} #2}{\ud_{AD}{#3}^{#1}}}
\newcommand{\deriv}[3][]{\frac{\ud^{#1} \hspace{-0.3mm} #2}{\ud{#3}^{#1}}}
\newcommand{\pderiv}[3][]{\frac{\d^{#1} \hspace{-0.1mm} #2}{\d{#3}^{#1}}}
\renewcommand{\div}[1][]{\nabla_{\!\! #1}\cdot \!}
\newcommand{\grad}[1][]{\nabla_{\! #1} }
\begin{document}

\title{Population persistence under advection-diffusion in river networks}


\author{Jorge M Ramirez \footnote{Universidad Nacional de Colombia, Sede Medellin. email: jmramirezo@unal.edu.co}}

\maketitle

\begin{abstract}
An integro-differential equation on a tree graph is used to model the evolution and spatial distribution of a population of organisms in a river network. Individual organisms become mobile at a constant rate, and disperse according to an advection-diffusion process with coefficients that are constant on the edges of the graph. Appropriate boundary conditions are imposed at the outlet and upstream nodes of the river network. The local rates of population growth/decay and that by which the organisms become mobile, are assumed constant in time and space. Imminent extinction of the population is understood as the situation whereby the zero solution to the integro-differential equation is stable. Lower and upper bounds for the eigenvalues of the dispersion operator, and related Sturm-Liouville problems are found, and therefore sufficient conditions for imminent extinction are given in terms of the physical variables of the problem.
\end{abstract}

\section{Introduction}

The problem of persistence of a population of organisms in an environment with predominantly unidirectional flow gives rise to the so-called ``drift paradox'' \citep{Muller:1982fk}. For riverine habitats, the problem is to determine physical, biological or dynamical mechanisms by which a species whose organisms spend time drifting in the water column, avoids being driven to extinction. Recently, a body of quantitative work has been used to address this problem,  see \citep{Lutscher:2005p1894} and references therein. In particular in \cite{Lutscher:2005p1894} the authors used the following integro-differential equation to model spatio-temporal dynamics of the population, at low-density values, on a river stretch of length $l$: 
\begin{equation}\label{IDE1d}
\pderiv{u}{t}(x,t) = r \, u(x,t) - \mu \, u(x,t) + \mu \int_0^l \K(y,x) u(y,t) \ud y.
\end{equation}
Here, $u(x,t)$ is the number of individuals per unit length at point $x$ and time $t$, $r>0$ is a net population growth rate at small densities, $\mu>0$ is the rate at which individuals become mobile, and $\K$ is a ``dispersion kernel'', namely $\K(y,x)$ is the probability that a mobile individual disperses from $y$ to $x$.

Of main interest is to find conditions under which the trivial solution $u \equiv 0$ is a stable state of equation \refp{IDE1d}. Under those conditions we say that the population faces ``imminent extinction'', namely it cannot endure low population density values. The complementary situation, where stability of the zero solution to \refp{IDE1d} does not hold, is referred to as ``persistence'' \citep{Jrgensen:2004p6218}.

The integro-differential Equation \refp{IDE1d} is of the Barbashin type,  and a necessary condition for the stability of the trivial solution is that its largest Lyapunov exponent be negative \citep{Appell:2000p6353}. Namely,
\begin{equation}\label{stability}
r - \mu + \mu \sup\{ \Re(\omega);\, \omega \in \spec(\K) \} < 0,
\end{equation}
where the spectrum $\spec(\K)$ of the integral operator $\K[f] = \int_0^l \K(y,x) f(y) \ud y$  denotes the collection of all its eigenvalues, that is, complex numbers $\omega$ such that $\K[f] = \omega f$ for some $f$ in the appropriate function space.

The analytical work in \cite{Lutscher:2005p1894} yielded the following results: (i) If the dispersal kernel does not depend on the size of the environment $l$, the largest eigenvalue $\omega_\K$ of $\K$ is an increasing function of $l$, and therefore, there exists a ``critical domain size'' $\bar l$ under which inequality \refp{stability} does not hold, and the population faces imminent extinction. (ii) If the motion of mobile individuals is assumed to follow an advection-diffusion process with drift $v$ and diffusion coefficient $D>0$, then upstream dispersal is likely enough to ensure the  existence of persistence scenarios, even for high values of $v$. These results rely on the assumption that the environment is a single river stretch and ignore boundaries of the environment; in particular the conditions in (ii) are found for an advection-diffusion kernel on the whole real line truncated to $[0,l]$.

In this paper we extend the analysis of \cite{Lutscher:2005p1894} in two directions. First the finiteness of the environment is taken into account and particular boundary conditions are considered. Secondly, equation \refp{IDE1d} is considered  in a directed tree graph made of several connected  segments with different physical properties, and hence arrive at conditions for persistence of populations undergoing advection-diffusion in a river network. The goal is to lay the foundation for a deeper understanding of the role that some of the properties of real river networks, e.g. heterogeneity, connectedness, scaling, etcetera,  play on the existence of persistence conditions.

The rest of the introduction is devoted to revisiting the conditions of stability on the one-dimensional case, and summarizing the results in the case of a network. In section \ref{SectionAD} the equations and operators for advection-diffusion on river networks are derived. Section  \refp{SectAnal} deals with the relationship between $\K$ and a related Sturm-Liouville operator, along with some important consequences. Lastly, sections \refp{SectVariational} and \refp{SectOscillation} are devoted to proving the stability conditions for populations on networks.

\subsection{The one-dimensional case} \label{Section1d}
As a motivation, and to fix ideas, assume the population density evolves on a river stretch $[0,l]$ as in \refp{IDE1d}. Consider also the following particular dispersion mechanism. Let $P(y,x,t)$ be the transition probabilities of a diffusion process with constant diffusion $D$, drift $v$; with absorbing boundary condition at $x=0$, and reflecting boundary conditions at $x=l$. Then $P$ satisfies the backwards equation,
\begin{equation}\label{p}
\pderiv{P}{t}  = D \pderiv[2]{P}{y} - v \pderiv{P}{y}, \quad P(0,t) = \pderiv{P}{x}(l,t) = 0.
\end{equation}
Suppose now that mobile organisms follow the diffusion process described in \refp{p} for an exponentially distributed random time of mean $\frac{1}{\sigma}$ units of the time scale $t$ defined by \refp{IDE1d}. Then, the dispersion kernel is
\begin{equation}\label{K1d}
\K(y,x) = \int_0^\infty \sigma e^{-\sigma t} P(y,x,t) \ud t.
\end{equation}

Estimates on the eigenvalues of $\K$ can be given in terms of the following two non-dimensional quantities:
\begin{equation}\label{nondim}
\Pe : =\frac{v l}{D}, \quad \Q:= \frac{v}{\sigma l}.
\end{equation}
The first one being the classical Pecl\`{e}t number, measuring the relative importance of advection and diffusion in the dispersion process. The variable $\Q$ compares the mean velocity of the channel with the velocity needed to transverse the whole channel during the mean time individuals are mobile. 

The results in the one-dimensional case follow from standard theory. However, details are included here to illustrate some of the ideas that will be later used in the case or river networks. 

\begin{theorem}\label{ThmStability1d}
The largest eigenvalue is given by $\omega_\K = \frac{1}{\nuu}$ where $\nuu$ satisfies
\begin{equation}\label{upperbound1d}
 1+\frac{1}{4}\Q\Pe< \nuu < 1+\frac{1}{4}\Q\Pe + \frac{\pi^2}{4}\frac{\Q}{\Pe}.
 \end{equation}
\end{theorem}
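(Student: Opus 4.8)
The plan is to recognize that $\K$ is, up to the scalar $\sigma$, the resolvent of the advection--diffusion generator evaluated at $\sigma$, and then transfer the spectral question to a Sturm--Liouville eigenvalue problem. Writing $\L = D\,\psi'' - v\,\psi'$ for the backward generator of \refp{p} together with its absorbing/reflecting boundary data, the semigroup identity $\K = \sigma\int_0^\infty e^{-\sigma t} e^{t\L}\ud t$ gives $\K = \sigma(\sigma - \L)^{-1}$, so that $\K f = \omega f$ precisely when $\L f = \lambda f$ with $\omega = \sigma/(\sigma-\lambda)$. Since $\L$ and its Fokker--Planck adjoint share the same real spectrum, and the kernel $\K(y,x)$ is strictly positive, the Krein--Rutman theory applies to the compact positive operator $\K$: its largest eigenvalue $\omega_\K$ is simple, real, and carried by a positive eigenfunction. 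As $\lambda\mapsto \sigma/(\sigma-\lambda)$ is increasing on $(-\infty,\sigma)$, $\omega_\K$ corresponds to the principal (largest, least negative) eigenvalue $\lambda_1<0$ of $\L$, whence $\omega_\K = \sigma/(\sigma-\lambda_1) = 1/\nuu$ with $\nuu = 1 - \lambda_1/\sigma$. It then remains only to locate $\lambda_1$.

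Next I would solve the constant-coefficient problem $D\psi'' - v\psi' = \lambda\psi$ explicitly. The substitution $\psi(y) = e^{vy/2D}\phi(y)$ removes the drift and symmetrizes the operator into $D\phi'' - \tfrac{v^2}{4D}\phi = \lambda\phi$; after imposing the absorbing (Dirichlet) condition at $0$ the principal eigenfunction is sinusoidal, $\phi = \sin(\beta y)$, with $\lambda_1 = -\tfrac{v^2}{4D} - D\beta_1^2$. The reflecting condition at $l$ then becomes a transcendental equation for $z:=\beta l$ of the form $\tan z = -\,2z/\Pe$, whose relevant positive root is $z_1 = \beta_1 l$. Substituting $\lambda_1$ into $\nuu = 1 - \lambda_1/\sigma$ and using the identities $\Q\Pe = v^2/(\sigma D)$ and $\Q/\Pe = D/(\sigma l^2)$ yields the clean formula
\[
\nuu = 1 + \tfrac14\,\Q\Pe + z_1^2\,\tfrac{\Q}{\Pe},
\]
so that the two inequalities in \refp{upperbound1d} reduce entirely to a localization of the principal root $z_1$, with the advection contributing only through the explicit shift $\tfrac14\Q\Pe$.

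The lower bound is then immediate, since the symmetrized operator has spectrum bounded above by $-v^2/4D$ (equivalently $z_1>0$, as a nontrivial Dirichlet eigenfunction forces $\beta_1\neq 0$), giving $\nuu > 1 + \tfrac14\Q\Pe$. The remaining inequality requires confining $z_1$ against the reference value $\pi/2$ attached to the constant $\pi^2/4$, which I would obtain by comparing the mixed absorbing/reflecting problem with the pure-diffusion reference problems (Dirichlet--Neumann, principal value $\pi/2$; Dirichlet--Dirichlet, principal value $\pi$) via a Sturm oscillation argument, or equivalently via the monotonicity of the Rayleigh quotient of the symmetrized form under the reflecting-boundary term. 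I expect the main obstacle to be exactly this sharp localization: one must control the sign and magnitude of the reflecting (Robin) boundary contribution to the principal eigenvalue, since it is this term --- rather than the elementary advection shift --- that decides within which quarter/half-period $z_1$ falls and hence pins the constant in \refp{upperbound1d}. This one-dimensional computation is the prototype of the variational and oscillation estimates developed for networks in Sections \refp{SectVariational} and \refp{SectOscillation}.
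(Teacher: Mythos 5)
Your reduction is essentially the paper's: you pass from $\K$ to the generator through the resolvent identity $\K=\sigma(\sigma-\A)^{-1}$ (the paper phrases this as the $q$-eigenvalue problem $\L[u]=\nu qu$ for the Sturm--Liouville operator with $p=e^{-vx/D}$), solve the constant-coefficient equation with $u(0)=u'(l)=0$, and land on the same representation $\nuu = 1+\tfrac14\Q\Pe + z_1^2\,\tfrac{\Q}{\Pe}$ with $z_1=b_1 l$ the first positive root of a transcendental equation. Your lower-bound argument (strict negativity of the symmetrized quadratic form, forcing $z_1>0$) is sound and matches the paper's observation that the boundary conditions force $\a,\b$ to have nonzero imaginary part. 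Incidentally, your equation $\tan z = -2z/\Pe$ is the correct one: with $a=v/(2D)$ the Robin ratio is $b/a = 2Db/v = 2z/\Pe$, whereas the paper's displayed equation $\tan(lb)+lb/\Pe=0$ drops this factor of $2$; the discrepancy does not affect the root localization below.

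The gap is the upper bound, which you never actually prove: you defer the localization of $z_1$ to an unspecified oscillation/comparison argument and flag it as the hard step. In fact no comparison machinery is needed --- since $\tan z>0$ on $(0,\pi/2)$ while $-2z/\Pe<0$, the first positive root of your own equation satisfies $z_1\in(\pi/2,\pi)$ by inspection, which is exactly how the paper's proof concludes. But executing this step exposes a real problem with the program you sketch: because $z_1>\pi/2$ always, one has $z_1^2>\pi^2/4$, so the route you propose (pinning $z_1$ below $\pi/2$ to match the constant $\pi^2/4$) cannot succeed for any parameter values. What the localization actually yields is
$1+\tfrac14\Q\Pe+\tfrac{\pi^2}{4}\tfrac{\Q}{\Pe} < \nuu < 1+\tfrac14\Q\Pe+\pi^2\tfrac{\Q}{\Pe}$,
i.e.\ the $\pi^2/4$ term belongs to a \emph{sharpened lower} bound, and the honest upper bound carries the constant $\pi^2$. (The paper's own proof derives $lb\in(\pi/2,\pi)$ and then asserts the printed bound with $\pi^2/4$, so the constant in the statement appears to be a slip there as well; your uncertainty about ``which quarter/half-period $z_1$ falls in'' is symptomatic of this.) To repair your write-up: state the one-line localization $z_1\in(\pi/2,\pi)$ and conclude the two-sided bound with constants $\pi^2/4$ and $\pi^2$; as written, your proposal leaves the decisive step unexecuted and aims at an inequality that the localization refutes rather than proves.
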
 

\begin{proof}
Any eigenvalue for the operator $\K$ in \refp{K1d} is given by $\omega= \frac{1}{\nu}$ where $\nu$ is a $q$-eigenvalue of the associated Sturm-Liouville operator
\begin{equation}
\L[f] = (-pf')'+qf, \quad p(x) := e^{-\frac{v}{D}x}, \; q(x) := \frac{\sigma}{D} p(x).
\end{equation}
Namely, $\nu$ satisfies $\L[u] = \nu q u$ for some $u \in \C^2([0,l])$ such that $u(0) = u'(l) = 0$. In particular, $u$ is has the form $u(x;\nu) = B e^{\a x} + C e^{\b x}$ for $B,C \in \R$, and $\a,\b$ given by 
$$\alpha := \frac{v+\sqrt{v^2-4D \sigma(\nu-1)}}{2D}, \quad 
\beta := \frac{v-\sqrt{v^2-4D \sigma(\nu-1)}}{2D}.$$
Evaluation of the boundary conditions yields that both $\a$ and $\b$ must have non-zero imaginary part, and so the estimate on the left hand side of \refp{upperbound1d} holds. Moreover, $\nu_1$ can be obtained as the smallest solution to
\begin{equation}\label{Eqnu1d}
\tan(lb(\nu)) + \frac{lb(\nu)}{\Pe} = 0,\quad \nu = \frac{(lb(\nu))^2+\Pe^2/4}{\Pe/\Q} + 1
\end{equation}
where $b(\nu)= \frac{1}{2D}\sqrt{|v^2-4D \sigma(\nu-1)|}$. The first of the equations on \refp{Eqnu1d} defines $b$ as a function of $l$, and one can differentiate implicitly to obtain $\deriv{b}{l} < 0$. Also $\deriv{\nu}{b} = \frac{2 b D}{\sigma} > 0$, so it follows that $\deriv{\nu}{l} < 0$. The upper bound in \refp{upperbound1d} follows simply from noting that since $\frac{lb}{\Pe} >0$, the smallest positive solution to $\tan(lb) + \frac{lb}{\Pe} =0$ must satisfy $lb \in (\tfrac{\pi}{2},\pi)$.
\end{proof}

The following corollaries contain the extinction and critical domain conditions that follow from theorem \refp{ThmStability1d}.

\begin{corollary}\label{CorStab1D}
\begin{enumerate}
\item If $r>\mu$ the population will persist. 
\item Given $v,D,\sigma$ and $l$, there is a critical population growth rate $\rcrit:=1-\frac{1}{\nu_1}$, such that if $r<\rcrit$, the population will face imminent extinction. Moreover,
\begin{equation}\label{condition1d}
\frac{\Pe \Q}{4 + \Pe \Q} < \frac{\rcrit}{\mu} < 1-\frac{4 \Q}{\Pe \left(\Q^2+\pi ^2\right)+4 \Q}.
\end{equation}
\item Given $0<r<\mu$ and fixing all other parameters, there is a critical length $l_\text{crit}$ such that if $l<l_\text{crit}$, the population will face imminent extinction.
\end{enumerate}
\end{corollary}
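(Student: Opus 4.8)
The plan is to read off all three statements directly from the identity $\omega_\K = 1/\nuu$ of Theorem \refp{ThmStability1d} combined with the Lyapunov criterion \refp{stability}, which for the single largest eigenvalue becomes $r - \mu + \mu/\nuu < 0$; equivalently, imminent extinction occurs exactly when $r < \mu\bigl(1 - 1/\nuu\bigr)$. Since the left inequality in \refp{upperbound1d} already gives $\nuu > 1$, the number $1 - 1/\nuu$ lies in $(0,1)$ and the extinction threshold $\mu(1-1/\nuu)$ is strictly below $\mu$. Part (1) is then immediate: if $r > \mu$ then $r - \mu + \mu/\nuu > 0$, so \refp{stability} fails and the population persists.

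For part (2) I would rearrange \refp{stability} to identify the threshold through $\rcrit/\mu = 1 - 1/\nuu$, so that $r < \rcrit$ forces extinction. The two-sided estimate \refp{condition1d} then follows by feeding the bounds \refp{upperbound1d} on $\nuu$ into the strictly increasing map $\nu \mapsto 1 - 1/\nu$: the lower bound $\nuu > 1 + \tfrac14\Q\Pe$ yields the left inequality $\rcrit/\mu > \Pe\Q/(4+\Pe\Q)$, and the upper bound on $\nuu$ yields the right inequality after clearing denominators. This reduces to elementary algebra once the monotonicity observation is in place.

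Part (3) is where the genuine work lies, and I would argue by monotonicity together with a limiting value. The proof of Theorem \refp{ThmStability1d} already records $\deriv{\nuu}{l} < 0$, so $\rcrit(l) = \mu\bigl(1 - 1/\nuu(l)\bigr)$ is continuous and strictly decreasing in $l$. The decisive step is the behaviour as $l \to 0$: writing $\Pe = vl/D \to 0$ and $\Q = v/(\sigma l) \to \infty$ with the product $\Pe\Q = v^2/(D\sigma)$ held fixed, the first relation in \refp{Eqnu1d} forces the smallest root to satisfy $l\,b(\nuu) \to \pi/2$, and substituting into the second relation gives $\nuu \sim \tfrac{\pi^2}{4}\,\Q/\Pe = \tfrac{\pi^2}{4}\,D/(\sigma l^2) \to \infty$. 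Hence $\rcrit(l) \to \mu$ as $l \to 0$. Since $0 < r < \mu$ by hypothesis, continuity and the monotone decrease of $\rcrit$ produce a threshold $l_\text{crit} > 0$ with $r < \rcrit(l)$ for every $l < l_\text{crit}$, which is precisely imminent extinction on sufficiently short stretches.

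The main obstacle I anticipate is the $l \to 0$ asymptotics in part (3). I must confirm that the smallest root of $\tan(lb) + lb/\Pe = 0$ converges to $\pi/2$ rather than to another branch of the tangent, and that the resulting blow-up of $\nuu$ is genuine rather than merely compatible with the upper bound in \refp{upperbound1d}, whose lower companion $1 + \tfrac14\Q\Pe$ remains bounded and so gives no information about divergence. Pinning down which branch of $\tan$ is selected, and verifying continuity of $l \mapsto \nuu(l)$ all the way down to the limit, are the delicate points; everything else collapses to monotonicity of $\nu \mapsto 1 - 1/\nu$ and routine rearrangement.
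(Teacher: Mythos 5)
Your proposal is correct and takes the route the paper intends: the paper gives no separate proof of Corollary \refp{CorStab1D}, presenting it as an immediate consequence of Theorem \refp{ThmStability1d} together with the Lyapunov criterion \refp{stability}, which is precisely your reading ($\omega_\K = 1/\nuu$, threshold $\rcrit/\mu = 1-1/\nuu$, monotonicity of $\nu \mapsto 1-1/\nu$). Your part (3) in fact supplies a step the paper leaves implicit: the computation $\deriv{\nuu}{l}<0$ in the proof of Theorem \refp{ThmStability1d} gives monotonicity of $\rcrit(l)$ but not the divergence needed so that $l_\text{crit}$ exists for \emph{every} $r \in (0,\mu)$; your asymptotics $l\,b(\nuu) \to \pi/2$ and $\nuu \sim \frac{\pi^2}{4}\frac{\Q}{\Pe} = \frac{\pi^2 D}{4\sigma l^2} \to \infty$ as $l \to 0$ are correct and close this. (Indeed, since the paper's own proof shows $lb \in (\pi/2,\pi)$ and $\nu = 1 + \frac{\Pe\Q}{4} + (lb)^2\frac{\Q}{\Pe}$, the quantity $1+\frac14\Pe\Q+\frac{\pi^2}{4}\frac{\Q}{\Pe}$ is actually a \emph{lower} bound for $\nuu$, which already diverges as $l\to 0$; this also indicates the upper bound in \refp{upperbound1d} should carry coefficient $\pi^2$ rather than $\pi^2/4$.)

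One caveat on part (2): your phrase ``after clearing denominators'' conceals a genuine mismatch. Substituting the stated upper bound of \refp{upperbound1d} into the increasing map $\nu \mapsto 1-1/\nu$ yields $\rcrit/\mu < 1 - \frac{4\Pe}{\Q\left(\Pe^2+\pi^2\right)+4\Pe}$, which is the right-hand side of \refp{condition1d} with $\Pe$ and $\Q$ interchanged; the two expressions coincide only when $\Pe = \Q$. Tracking dimensions confirms that the correct factor coming from $D b^2/\sigma$ is $\frac{\Q}{\Pe} = \frac{D}{\sigma l^2}$, so the error lies in \refp{condition1d} as printed (a typo in the paper) rather than in your method; still, a careful write-up should record that the printed inequality does not literally follow from \refp{upperbound1d}, instead of asserting that it does.
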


Figures \refp{FigRcrit} and \refp{FigPcrit} show the dependence of the critical values in terms of non-dimensional quantities: $\Pe$, $\Q$, $\Pe \Q = \frac{v^2}{D\sigma}$ and $\frac{r}{\mu}$.

\begin{figure}
  \includegraphics[scale=0.3]{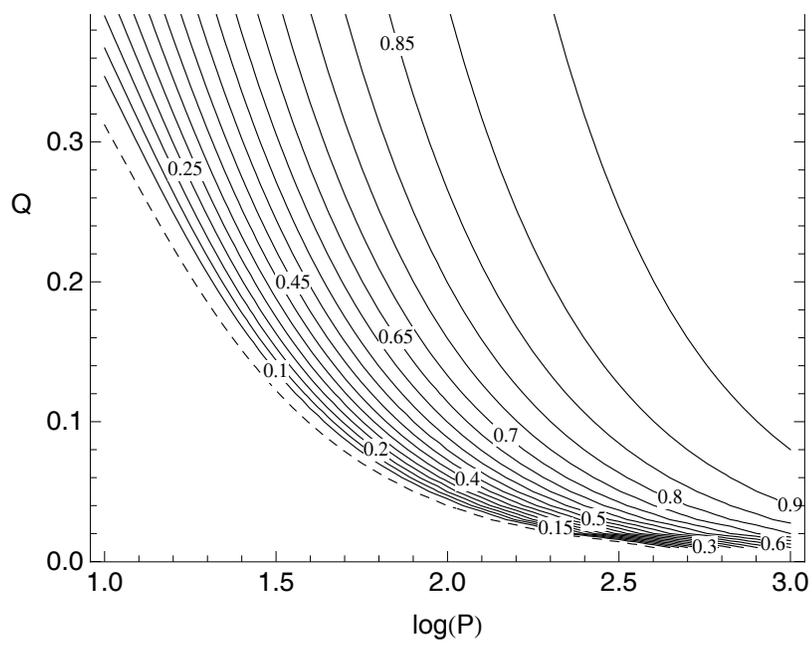}
\caption{Values of $\rcrit$ as a function of $\Pe$ and $\Q$. Note that the horizontal scale is logarithmic. The dashed line shows the boundary of positive critical reproductive rate.}
\label{FigRcrit}       
\end{figure}

\begin{figure}
  \includegraphics[scale=0.4]{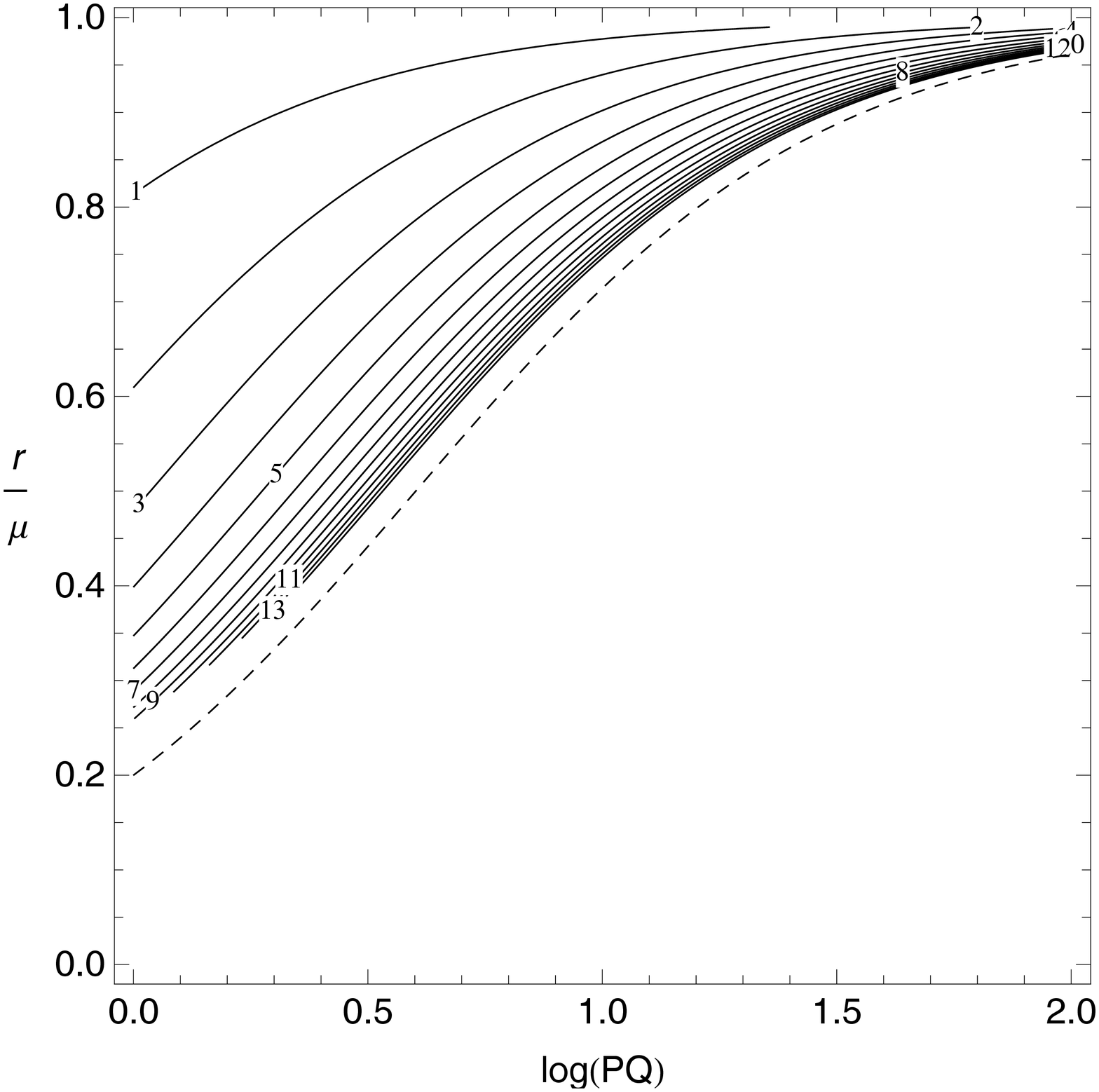}
\caption{Critical P\`{e}clet number $\Pe_\text{crit}$. Given $\frac{r}{\mu}$ and fixed $v,D$ and $\sigma$, the value $\Pe_\text{crit} = \frac{vl_\text{crit}}{D}$ gives the length of channel below which extinction is imminent. Note the horizontal logarithmic scale. The dashed line corresponds to the lower bound on the left hand side of \refp{condition1d}}
\label{FigPcrit}       
\end{figure}

\subsection{Dispersion on networks.}
By a network $\Gamma$ we understand a directed, finite, binary, geometric graph embedded in $\R^2$. We assume that each edge $e \in \Gamma$ allows a sufficiently smooth parametrization, contains no self-intersections, and is finite, therefore can be considered as the interval $e = [0,l_e]$. A point on $\Gamma$ is then denoted as the pair $(e,x)$ with $0 \leq x \leq l_e$. The ``root edge'' is denoted as $r$, and each edge has either zero or two children edges connected at the point $(e,l_e)$. At each endpoint of an edge is located a node of $\Gamma$. The set of nodes is $N(\Gamma)$ and boldface is used to denote individual nodes. The ``upstream node'' of edge $e$ is $\e = (e,l_e)$; the ``root node'' of $\Gamma$ is $\bs{\phi} = (r,0)$.

 The children edges of $r$ are $\la 0\ra$ and $\la 1 \ra$. Inductively, the children edges of some edge $e = \la i \ra$ are $\la i0 \ra$ and $\la i1 \ra$, also denoted as $e0$ and $e1$. The upstream node of $e$ has three possible representations
$$\bs e = (e,l_e) = (e0,0) = (e1,0).$$ 
We say that $e$ is a ``leaf edge'' if it has no children edges. Upstream nodes of leaf edges form the upstream boundary of $\Gamma$, denoted as $U(\Gamma)$. The ``boundary'' of $\Gamma$ is defined as $\d \Gamma = U(\Gamma) \cup \{\bs{\phi}\}$. Internal nodes are $I(\Gamma) = N(\Gamma)\smallsetminus \d\Gamma$.

Values of a function $f:\Gamma \to \R$ are denoted by $f_e(x) = f(e,x)$. That is, $f_e$ is the restriction of $f$ to the edge $e$. Conversely, given any collection of functions $\{f_e:[0,l_e]\to \R; \, e \in \Gamma\}$ with the property $f_e(l_e) = f_{e0}(0) = f_{e1}(0)$ for all $e \in I(\Gamma)$, one can construct the function $f = \sum_{e \in \Gamma} f_e \IND{e}$ on $\Gamma$. The derivative of a function $f:\Gamma \to \R$ at a point $(e,x)$ in the interior of $e$ is understood in the usual sense. For edge endpoints $(e,0)$ and $(e,l_e)$ the derivative $f'$ is understood as the right and left derivatives respectively. At an internal node $\e$, the derivative of a function is not well defined as it can take three values. For nodes $\e \in \d \Gamma$, $f'(\e)$ denotes the appropriate one-sided derivative. 

The space of functions that are continuous inside each edge of $\Gamma$ and at all nodes, is denoted by $\C(\bar \Gamma)$. The set $\C(\Gamma)$ contains, in contrast, those functions that are continuous for $(0,l_e)$ for each edge, but such that their values at nodes may not correspond to the one-sided limit taken along an incoming edge. Similarly, $\C^n(\Gamma)$, $n=1,2,\dots,\infty$ denotes the space of functions that are $n$ times continuously differentiable in the interior of all edges.  

A tree graph $\Gamma$ will serve as a model for a river network with outlet located at $\bs{\phi}$, and each edge is the one-dimensional representation of a stream. Given an edge $e$, the following strictly positive variables are defined: water velocity $v_e$, cross-sectional area $A_e$, and diffusion coefficient $D_e$. We are then assuming that these correspond to representative spatio-temporal mean values of the physical variables on each stream in the network. The following conservation of water flux plays an important role
\begin{equation}\label{consWater}
A_{e0} v_{e0} + A_{e1} v_{e1} = A_e v_e, \quad \e \in I(\Gamma).
\end{equation}
the functions $A$, $v$ and $D$ are defined in the interior of each edge as (for example) $D(x) = D_e$ for $x \in e$; at internal nodes they are defined as (say) equal to zero, and at boundary nodes they are defined as the value of the corresponding variable in the node's edge.

As in \refp{Section1d}, assume that mobile individuals follow random trajectories according to a diffusion process on $\Gamma$ with drift $v$ and diffusivity $D$, reflecting conditions on the upstream boundary nodes of the network and absorbing at the outlet node \citep[see][]{Freidlin:1993p4685}. While in the interior of edge $e$, diffusing particles will follow an usual diffusion process with drift and diffusion coefficients equal to $v_e$ and $D_e$ respectively. If $P(y,x,t)$, $x,y \in \Gamma$, $t>0$ is the family of transition probability densities for the aforementioned diffusion process, then at any internal node $\e$, and as a function of the backwards variable, $P(\cdot,x,t)$ must be continuous and the total diffusive flux must be equal to zero. 

More precisely, for any function $f:\Gamma \to \R$, and $\e \in I(\Gamma)$, define the total flux at $\e$ by
\begin{equation}
\derivAD{f}{}(\e) = A_eD_e f_e'(l_e) - A_{e0}D_{e0} f_{e0}'(0) - A_{e1}D_{e1} f_{e1}'(0),
\end{equation}
and consider the infinitesimal operator 
\begin{equation}\label{A}
\A[f] = D_e f_e'' - v_e f_e' \text{ on edge } e, \quad e \in \Gamma,
\end{equation}
with domain
\begin{equation}\label{DomA}
\begin{split}
\Dom(\A) = \{ f \in \C(\bar \Gamma) \cap \C^2(\Gamma):& \; f(\bphi) = 0, \; f'(\e)=0 \text{ for all } \e \in U(\Gamma), \\
& \; \derivAD{f}{}(\e) = 0, \text{ for all } \e \in I(\Gamma)\}.
\end{split}
\end{equation}
For any $x \in \Gamma$, define $P(\cdot,x,t)$ as the solution to the ``backwards'' equation
\begin{equation}\label{P}
\pderiv{P}{t}(\cdot,x,t) = \A[P(\cdot,x,t)].
\end{equation}

As in the one-dimensional case described in Section \refp{Section1d}, let $\sigma >0$  and define the dispersion kernel by
\begin{equation}\label{Kxy}
\K(y,x) = \int_0^\infty \sigma e^{-\sigma t} P(y,x,t) \ud t, \quad x,y \in \Gamma.
\end{equation}
Performing a change of time scales $t_{new} = \mu t$ on the analog of equation \refp{IDE1d}, writing $\bar r:= r/\mu$ and retaining the symbol $t$ for time, gives the integro-differential equation governing a population density $u$,
\begin{equation}\label{IDE}
\pderiv{u}{t}(x,t) = (\bar r -1) u(x,t) + \int_\Gamma \K(y,x) u(y,t) \ud y.
\end{equation}
Equation \refp{IDE} is to be understood as an evolution equation in the Banach space $\C(\bar\Gamma)$, and thus, one can characterize the stability of the trivial solution $u\equiv 0$ via the spectrum of the operator 
\begin{equation}\label{K}
\K[f] := \int_\Gamma \K(y,x) f(y) \ud y.
\end{equation}
More precisely \citep{Appell:2000p6353},

\begin{theorem}
Let $\omega_\K$ be the largest eigenvalue of the operator $\K$ given in \refp{K}. The inequality 
$$\bar r - 1 + \omega_\K <0$$
is sufficient for the exponential stability of the trivial solution $u \equiv 0$ to \refp{IDE}.
\end{theorem}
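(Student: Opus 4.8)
The plan is to read \refp{IDE} as a linear autonomous evolution equation $\partial_t u = \mathcal{B}[u]$ on the Banach space $\C(\bar\Gamma)$, where $\mathcal{B} := (\bar r - 1)\,I + \K$ and $\K$ is the integral operator in \refp{K}. The decisive structural fact is that $\K$ is a bounded—indeed compact—linear operator, so that $\mathcal{B}$ is bounded and generates the uniformly continuous semigroup $T(t) = e^{t\mathcal{B}}$. Exponential stability of the trivial solution is then precisely the assertion that the growth bound $\omega_0(\mathcal{B}) := \lim_{t\to\infty} \tfrac1t \log \|T(t)\|$ is negative, and the whole argument reduces to locating the spectral bound of $\mathcal{B}$.

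First I would verify that $\K$ is compact on $\C(\bar\Gamma)$. Since $P(y,x,t)$ is the transition density of the diffusion generated by $\A$ and $\K(y,x) = \int_0^\infty \sigma e^{-\sigma t} P(y,x,t)\ud t$ is (up to the factor $\sigma$) the kernel of the resolvent $(\sigma I - \A)^{-1}$, the kernel $\K(\cdot,\cdot)$ is continuous and bounded on the compact set $\Gamma\times\Gamma$, and the output functions are continuous on $\bar\Gamma$ with matching values at internal nodes; a continuous kernel on a compact space yields a compact integral operator by the Arzel\`a--Ascoli theorem. Consequently $\spec(\K) = \{0\}\cup\{\omega_1,\omega_2,\dots\}$ is a countable set of eigenvalues accumulating only at $0$. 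Because the kernel is strictly positive, a Krein--Rutman argument gives that the eigenvalue of largest modulus $\omega_\K$ is real, positive, and dominant, so that $\sup\{\Re\omega:\omega\in\spec(\K)\} = \omega_\K$ (the point $0\in\spec(\K)$ does not compete, since $\omega_\K>0$).

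Next, because $\mathcal{B}$ is bounded, the spectral mapping theorem for the holomorphic functional calculus applies in its strong form, $\spec(e^{t\mathcal{B}}) = \exp\!\big(t\,\spec(\mathcal{B})\big)$, whence the growth bound equals the spectral bound, $\omega_0(\mathcal{B}) = s(\mathcal{B}) := \sup\{\Re\lambda : \lambda\in\spec(\mathcal{B})\}$. A bounded additive shift merely translates the spectrum, $\spec(\mathcal{B}) = (\bar r-1)+\spec(\K)$, so
$$ s(\mathcal{B}) = (\bar r - 1) + \sup\{\Re\omega : \omega\in\spec(\K)\} = \bar r - 1 + \omega_\K. $$
Under the hypothesis $\bar r - 1 + \omega_\K < 0$ this forces $\omega_0(\mathcal{B})<0$, i.e. $\|T(t)\| \leq M e^{\omega t}$ for some $\omega<0$, which is exactly exponential stability of $u\equiv 0$.

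I expect the only genuine subtlety to be the passage from the spectral bound to the growth bound: for general $C_0$-semigroups these quantities can differ and the spectral mapping property may fail, so it is worth stating explicitly that here it is unproblematic because $\K$ is bounded and $T(t)$ is uniformly continuous, rather than leaving it buried in the citation. A secondary point to confirm is that compactness and positivity survive the absorbing condition at $\bphi$ and the reflecting/flux conditions at the other nodes encoded in $\Dom(\A)$; these shape the kernel $P$ but do not disturb the general structure that renders $\K$ a compact positive operator with a dominant eigenvalue.
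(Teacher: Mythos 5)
Your proposal is correct, and the comparison is slightly unusual: the paper offers no proof of this theorem at all, stating it as a known consequence of the stability theory for integro-differential equations of Barbashin type with a bare citation to Appell et al. Your argument is, in substance, the standard semigroup proof behind that citation: read \refp{IDE} as $\partial_t u = \mathcal{B}[u]$ with bounded generator $\mathcal{B} = (\bar r -1)I + \K$ on $\C(\bar\Gamma)$, use uniform continuity of $e^{t\mathcal{B}}$ so that the spectral mapping theorem holds and the growth bound equals the spectral bound, and translate the spectrum by $\bar r - 1$; so you have filled in exactly what the paper delegates to the literature, and you correctly flag the spectral-bound-versus-growth-bound issue as the point that would be delicate for unbounded generators but is harmless here. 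The one ingredient you add beyond the citation --- compactness and positivity of $\K$, so that a largest eigenvalue $\omega_\K$ exists and satisfies $\sup\{\Re(\omega):\omega\in\spec(\K)\} = \omega_\K$ --- is genuinely needed, since the theorem's statement presupposes it; your route is sound, but note two details. First, the kernel is not strictly positive up to the boundary (in the forward variable it vanishes at the absorbing outlet $\bphi$), so rather than a strong-positivity Krein--Rutman argument you should invoke the version for a compact positive operator with $r(\K)>0$: then $r(\K)$ is itself an eigenvalue, and dominance is automatic because $\sup\{\Re(\omega):\omega \in \spec(\K)\} \leq r(\K) = \omega_\K$, with no irreducibility required. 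Second, within the paper this dominance also follows without Krein--Rutman from theorem \refp{ThmLK} combined with the fact that all eigenvalues of problem \refp{spectral} are real and bounded below by one, so that $\spec(\K)\smallsetminus\{0\}$ is real and contained in $(0,1]$; either way your conclusion $s(\mathcal{B}) = \bar r - 1 + \omega_\K < 0$, hence $\|e^{t\mathcal{B}}\| \leq M e^{\omega t}$ with $\omega < 0$, stands.
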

 
\subsection{A related Sturm-Liouville problem}
The dispersion kernel $\K$ is closely related to a Sturm-Liouville problem on $\Gamma$. Consider the functions
\begin{equation}\label{pq}
p(x) := \exp \left\{ -\int_{\bphi}^x \frac{v(y)}{D(y)} \ud y \right\}, \quad q(x) = \frac{\sigma p(x)}{D(x)},
\end{equation}
where the integral on the definition of $p$ is taken along the unique path connecting the root $\bphi$ and the point $x \in \Gamma$. 

Define the following Sturm-Liouville operator,
\begin{equation}\label{L}
\L[f] = -(p_e f'_e)' + q_e f_e \text{ on edge } e, \quad e \in \Gamma,
\end{equation}
with domain 
\begin{equation}\label{DomL}
\Dom(\L) = \left\{ f \in \C(\bar \Gamma) \cap \C^2(\Gamma); \derivAD{f}{}(\e) = 0 \text{ for all } \e \in I(\Gamma)\right\}.
\end{equation}
The specification of ``hydrological'' boundary conditions is made considering functions on the following class:
\begin{equation}\label{BH}
B_H = B_H(\Gamma) = \{f:\Gamma \to \R; \; f(\bphi) = 0, \; f'(\e) = 0 \text{ for all } \e \in U(\Gamma) \}
\end{equation}
The interest here is mostly on functions that belong to $\Dom(\L) \cap B_H = \Dom(\A)$.

The next theorem describes how $\K$ and $\L$ are related. 

\begin{theorem}\label{ThmLK}
Let $\L$ and $\K$ be defined as in \refp{L} and \refp{K} respectively.
Then $\omega$ is an eigenvalue of $\K$, if and only if $\nu = \frac{1}{\omega}$ is a $q$-eigenvalue of $\L$ restricted to $B_H$, namely, there exists a solution $u$ to the following problem
\begin{equation}\label{spectral}
u\in \Dom(\L)\cap B_H, \quad \L[u] = \nu q u.
\end{equation}

\end{theorem}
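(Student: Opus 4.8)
The plan is to recognize $\K$ as a resolvent of the diffusion generator $\A$ and thereby reduce its eigenvalue problem to the $q$-eigenvalue problem \refp{spectral}. The algebraic backbone is the pair of identities, valid on each edge,
\begin{equation*}
\A[f] = \frac{D}{p}\,(p f')', \qquad \L[f] = -\frac{p}{D}\big(\A[f] - \sigma f\big),
\end{equation*}
both of which follow at once from $p' = -\tfrac{v}{D}\,p$ and the definitions $q = \sigma p/D$ in \refp{pq}. The second identity shows that $\L[u] = \nu q u$ is equivalent to $\A[u] = \sigma(1-\nu)u$; since $\Dom(\L)\cap B_H = \Dom(\A)$, problem \refp{spectral} is exactly the statement that $\sigma(1-\nu)$ is an eigenvalue of the generator $\A$. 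So it suffices to show that $\omega$ is an eigenvalue of $\K$ if and only if $\sigma(1 - 1/\omega)$ is an eigenvalue of $\A$.

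First I would interpret $\K$ as a resolvent. Setting $T_t g(y) := \int_\Gamma P(y,x,t) g(x)\ud x$, equation \refp{P} identifies $(T_t)_{t\geq 0}$ as the semigroup generated by $\A$. Because the kernel in \refp{K} is integrated against the backward variable $y$, the operator $\K$ equals $\sigma\int_0^\infty e^{-\sigma t} T_t^*\,\ud t = \sigma(\sigma - \A^*)^{-1}$, the scaled resolvent of the adjoint (forward, Fokker--Planck) operator $\A^*$. Hence $\K[f] = \omega f$ is equivalent to $\A^*[f] = \sigma(1-1/\omega)f$, and the remaining task is to transfer this from $\A^*$ back to $\A$.

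The bridge is reversibility. The internal-node conditions defining $\Dom(\A)$ carry the weights $A_e D_e$, and together with $p_e' = -\tfrac{v_e}{D_e}p_e$ and the continuity of $p$ across nodes one checks by Green's identity that $\A$ is symmetric on $\Dom(\A)$ with respect to the measure $w\,\ud x = A\,p\,\ud x$; the node boundary terms collapse precisely to $p(\e)\,\derivAD{f}{}(\e)=0$, while $u(\bphi)=0$ and $u'(\e)=0$ on $U(\Gamma)$ dispose of the remaining boundary terms. Self-adjointness (with compact resolvent, hence a discrete real spectrum) then yields the detailed-balance symmetry $w(y)P(y,x,t) = w(x)P(x,y,t)$. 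Substituting this into \refp{Kxy}–\refp{K} converts the integration over the backward variable into the resolvent of $\A$ acting on $f/w$, namely $\K[f] = w\,\sigma(\sigma - \A)^{-1}(f/w)$. Thus $\K[f] = \omega f$ is equivalent to $\A[u] = \sigma(1-\nu)u$ with $u := f/w = f/(A p)$ and $\nu = 1/\omega$, and by the identity above to $\L[u] = \nu q u$. The converse follows by reversing each step, starting from $u\in\Dom(\L)\cap B_H$ solving \refp{spectral} and setting $f = w u$.

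The main obstacle is this passage between the backward and forward descriptions and the attendant boundary bookkeeping. Since $\K$ acts by integrating the backward variable, its eigenfunctions naturally satisfy the adjoint density/flux conditions of $\A^*$, under which $f$ may jump across internal nodes; the conjugation $u = f/(A p)$ must be shown to carry these exactly onto the conditions defining $\Dom(\A) = \Dom(\L)\cap B_H$, i.e.\ continuity of $u$ in $\C(\bar\Gamma)$, the zero-flux conditions $\derivAD{u}{}(\e)=0$, and the hydrological conditions $u(\bphi)=0$, $u'(\e)=0$ on $U(\Gamma)$ (the jumps in $f$ being precisely those of $A$). Establishing this domain correspondence, and justifying the analytic manipulations — interchanging the $\ud t$ and $\ud y$ integrations, invertibility of $\sigma - \A^*$, and discreteness of the spectrum — is where the real work lies; the reduction to Sturm--Liouville form is then immediate.
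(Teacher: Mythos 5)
Your proposal is correct and follows essentially the same route as the paper: your identity $\L[f]=-\frac{p}{D}(\A[f]-\sigma f)$ is precisely the paper's definition \refp{L2}, your resolvent identification $\K_b=\sigma(\sigma-\A)^{-1}$ is the paper's starting observation in Section \refp{SectAnal}, and your detailed-balance relation $A(y)p(y)P(y,x,t)=A(x)p(x)P(x,y,t)$ is exactly the kernel symmetry \refp{symK} (there derived from self-adjointness of $\L$ with respect to $\ud AD$, which is equivalent to symmetry of $\A$ with respect to $Ap\,\ud x$), followed by the same conjugation $u=f/(pA)$ as in parts (\textit{iii})--(\textit{iv}) of the paper's theorem. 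The only cosmetic difference is that you establish the symmetry at the level of the semigroup/transition densities, whereas the paper works at the resolvent level via the Green's function $G(y,x)=q(x)\K(y,x)$.
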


General theory about the Sturm-Liouville problem on graphs, including the existence and representation of  solutions, Green's functions, and information about eigenvalues and eigenfunctions, can be found in recent literature. In particular, combining general results by \cite{Pokornyi:2004p3263} and \cite{VonBelow:1988p3428} one gets 
\begin{theorem}
All eigenvalues of problem \refp{spectral} are real, bounded below by one, and form a unbounded discrete set.
\end{theorem}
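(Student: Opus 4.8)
The plan is to recognize \refp{spectral} as a regular, self-adjoint Sturm--Liouville problem on the finite metric tree $\Gamma$ with weight $q$, and to extract the three assertions from the variational (quadratic-form) structure of $\L$ together with the standard spectral theory of such problems. Concretely, I would work in the weighted space $L^2(\Gamma, q\,\ud x)$ and introduce the energy form
\[
a(u,w) = \int_\Gamma \bigl( p\, u'\, \overline{w'} + q\, u\, \overline{w} \bigr)\ud x,
\]
defined on the form domain $V = \{ u \in \C(\bar\Gamma) : u|_e \in H^1(e)\ \text{for each } e,\ u(\bphi)=0 \}$. Integrating by parts on each edge yields the Green identity
\[
\int_\Gamma (\L u)\,\overline{w}\,\ud x = a(u,w) - \sum_{\text{nodes}} \bigl[\, p\, u'\, \overline{w}\,\bigr],
\]
where the final sum collects the one-sided endpoint contributions regrouped at the nodes of $\Gamma$. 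Everything then reduces to showing that, for $u,w \in \Dom(\L)\cap B_H$, this boundary sum vanishes.

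The heart of the argument, and the step I expect to be the main obstacle, is precisely the verification that the vertex conditions defining $\Dom(\L)\cap B_H$ are self-adjoint conditions for which the boundary sum above cancels. At the root the Dirichlet condition $u(\bphi)=0$ in \refp{BH} annihilates the term; at each upstream leaf the Neumann condition $u'(\e)=0$ does the same; and at each internal node $\e$ the continuity of $u$ and $w$ across $\e$ must be combined with the zero-flux condition $\derivAD{u}{}(\e)=0$ from \refp{DomL} to force cancellation. The delicate point is that the flux in \refp{DomL} is weighted by $A_eD_e$, whereas the coefficient appearing in the Green identity is $p$ as defined in \refp{pq}; reconciling these — so that the $A_eD_e$-weighted Kirchhoff condition is \emph{exactly} the natural vertex condition rendering the symmetric form $a$ self-adjoint — is where the specific structure of $p,q$ and the water-flux conservation law \refp{consWater} enter, and where I would appeal to the general self-adjoint framework for Sturm--Liouville operators on networks of \cite{VonBelow:1988p3428,Pokornyi:2004p3263}. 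Once this is in place, $\L$ restricted to $B_H$ with weight $q$ is self-adjoint and bounded below, so every $\nu$ solving \refp{spectral} is real.

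With self-adjointness established, the lower bound is immediate from the Rayleigh quotient. Taking $w=u$ an eigenfunction and using the vanishing of the boundary sum gives $\nu \int_\Gamma q\,|u|^2 = a(u,u) = \int_\Gamma p\,|u'|^2 + \int_\Gamma q\,|u|^2$, whence, since $p>0$ and $q>0$,
\[
\nu = 1 + \frac{\int_\Gamma p\,|u'|^2\,\ud x}{\int_\Gamma q\,|u|^2\,\ud x} \geq 1.
\]
Equality would force $u'\equiv 0$, i.e.\ $u$ constant on $\Gamma$, which together with $u(\bphi)=0$ gives $u\equiv 0$; hence every eigenvalue in fact exceeds $1$, and in particular all are bounded below by one. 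Finally, for discreteness and unboundedness I would invoke compactness: on the finite graph $\Gamma$ each edge is a bounded interval, so $H^1(e)\hookrightarrow L^2(e)$ compactly, and the continuity and Dirichlet constraints cut out a closed subspace $V$ that embeds compactly into $L^2(\Gamma, q\,\ud x)$. A form that is symmetric, closed, and bounded below with compactly embedded form domain has compact resolvent, so its spectrum is a discrete sequence $\nu_1 \le \nu_2 \le \cdots \to +\infty$; the graph-specific existence and representation of these eigenvalues and eigenfunctions is exactly what the cited results of \cite{Pokornyi:2004p3263} and \cite{VonBelow:1988p3428} provide.
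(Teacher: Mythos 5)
Your overall strategy --- reality of eigenvalues via a symmetric form, the lower bound via the Rayleigh quotient, discreteness and unboundedness via compact embedding of the form domain --- is the right one, and is essentially the route the paper itself develops: the paper justifies this theorem only by citing \cite{Pokornyi:2004p3263} and \cite{VonBelow:1988p3428}, but its Theorem \refp{ThmVariational} together with \refp{selfadjoint} is precisely the variational argument you sketch. However, the step you yourself single out as the main obstacle is genuinely broken as you set it up, and the repair you gesture at would not work. With your form $a(u,w)=\int_\Gamma \bigl(p\,u'\,\overline{w'}+q\,u\,\overline{w}\bigr)\ud{x}$, integration by parts on each edge produces at an internal node $\e$ the contribution $p(\e)\,\overline{w}(\e)\,\bigl[u_e'(l_e)-u_{e0}'(0)-u_{e1}'(0)\bigr]$, using that $p$ is continuous across nodes (immediate from its path-integral definition \refp{pq}) and that $w$ is continuous. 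But the vertex condition in $\Dom(\L)$ is the $A_eD_e$-weighted one, $A_eD_e\,u_e'(l_e)=A_{e0}D_{e0}\,u_{e0}'(0)+A_{e1}D_{e1}\,u_{e1}'(0)$, so your node term does \emph{not} vanish unless $A_eD_e$ happens to take the same value on all three incident edges --- false for a generic river network. Moreover, the water-flux conservation law \refp{consWater} cannot rescue this: it constrains the products $A_ev_e$, not $A_eD_e$, and plays no role in the cancellation.

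The missing idea is the choice of measure: weight the form and all inner products by $\ud{AD}=A_eD_e\,\ud{x}$, i.e., take $\F(u,v)=\int_\Gamma \bigl(pu'v'+quv\bigr)\ud{AD}$ on the domain \refp{DomF}, exactly as in \refp{F}, and pose the eigenvalue problem in $L^2$ with weight $qAD$. Since $A_eD_e$ is constant on each edge, the edge-boundary terms become $A_eD_e\,p\,u'\,\overline{w}\,\big|_0^{l_e}$, and at an internal node they regroup --- again by continuity of $p$ and $w$ --- into exactly $p(\e)\,\overline{w}(\e)\,\derivAD{u}{}(\e)=0$; the root and leaf terms die as you say. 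With this single correction the rest of your proposal goes through verbatim: symmetry gives real eigenvalues; the Rayleigh identity gives $\nu = 1 + \int_\Gamma p|u'|^2\ud{AD}\,\big/\int_\Gamma q|u|^2\ud{AD}\geq 1$, with strictness by your observation that $u'\equiv 0$ and $u(\bphi)=0$ force $u\equiv 0$; and the compact embedding of $H^1(\Gamma)\cap\C(\bar\Gamma)$ into the weighted $L^2$ space on the finite graph yields a discrete spectrum accumulating only at $+\infty$.
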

And its immediate corollary,
\begin{corollary}
\begin{enumerate}
\item If $r>\mu$ then the population persists.
\item Let $\nuu(\Gamma)$ be the smallest eigenvalue of problem \refp{spectral}. Then 
\begin{equation}
\rcrit := \mu \left(1- \frac{1}{\nuu(\Gamma)} \right)
\end{equation}
is the ``critical reproduction rate''. For $r < \rcrit$ the population will face imminent extinction. 
\end{enumerate}
\end{corollary}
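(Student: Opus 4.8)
The plan is to read off the spectrum of $\K$ from the Sturm--Liouville problem \refp{spectral} and then feed the result into the two stability theorems already established. First I would invoke Theorem \refp{ThmLK} to identify $\spec(\K)$ with the reciprocals of the $q$-eigenvalues of $\L$ on $B_H$: every eigenvalue $\omega$ of $\K$ has the form $\omega = 1/\nu$ for some solution $\nu$ of \refp{spectral}, and conversely. Since the preceding theorem guarantees that all such $\nu$ are real, satisfy $\nu \geq 1$, and form a discrete unbounded set $\nuu(\Gamma) = \nu_1 \leq \nu_2 \leq \cdots \to \infty$, the map $\nu \mapsto 1/\nu$ is positive and order-reversing on this set. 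Consequently the eigenvalues of $\K$ are real, lie in $(0,1]$, and the largest one is $\omega_\K = 1/\nuu(\Gamma)$, attained precisely at the smallest Sturm--Liouville eigenvalue. This existence of a genuine largest eigenvalue is exactly what the stability theorem presupposes.

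With this identification both parts reduce to elementary inequalities for the Lyapunov exponent $\bar r - 1 + \omega_\K$, where $\bar r = r/\mu$. For part (ii) I would translate the hypothesis $r < \rcrit = \mu\bigl(1 - 1/\nuu(\Gamma)\bigr)$ directly: dividing by $\mu$ gives $\bar r < 1 - 1/\nuu(\Gamma) = 1 - \omega_\K$, that is $\bar r - 1 + \omega_\K < 0$. The stability theorem then yields exponential stability of $u \equiv 0$, which is by definition imminent extinction.

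For part (i) I would exploit strict positivity of $\omega_\K$. Because every Sturm--Liouville eigenvalue is bounded below by one, $\omega_\K = 1/\nuu(\Gamma) > 0$. If $r > \mu$ then $\bar r - 1 > 0$, so the Lyapunov exponent $\bar r - 1 + \omega_\K$ is strictly positive. Since a negative largest Lyapunov exponent is \emph{necessary} for stability of the trivial solution, as recalled in \refp{stability}, stability must fail, and the population persists by definition.

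The only point requiring care --- and the step I expect to be the main obstacle --- is the passage from ``the largest eigenvalue $\omega_\K$ of $\K$'' to the quantity $\sup\{\Re(\omega);\, \omega \in \spec(\K)\}$ entering the stability criterion \refp{stability}. This identification rests on the reality of all eigenvalues of $\K$, so that $\Re(\omega) = \omega$, together with the positivity and order-reversal of the reciprocal map, both supplied by the bound $\nu \geq 1$; one must also confirm that $\spec(\K)$ carries no residual or continuous part beyond these eigenvalues, so that $\omega_\K$ genuinely realizes the supremum. Once this is secured, the corollary follows from the displayed inequalities.
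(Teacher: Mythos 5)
Your argument is correct and coincides with the paper's intended reasoning: the paper presents this corollary as ``immediate'' from Theorem \refp{ThmLK} combined with the preceding theorem (eigenvalues of \refp{spectral} real, bounded below by one, discrete and unbounded, so $\omega_\K = 1/\nuu(\Gamma) \in (0,1]$ exists), fed into the sufficiency theorem for part (ii) and the necessity of a negative Lyapunov exponent \refp{stability} for part (i), exactly as you do. The one caveat you flag as the ``main obstacle'' --- ruling out residual or continuous spectrum so that $\omega_\K$ realizes the supremum in \refp{stability} --- is in fact not needed in either direction: the sufficiency theorem is phrased directly in terms of the largest eigenvalue $\omega_\K$, and for part (i) one only needs $\sup\{\Re(\omega);\, \omega \in \spec(\K)\} \geq \omega_\K > 0$, which is automatic since $\omega_\K \in \spec(\K)$.
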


\subsection{Conditions for imminent extinction}

Calculation of the smallest eigenvalue $\nuu(\Gamma)$ to problem \refp{spectral} reduces to finding the roots of a determinant, and can be performed for small networks. In general, a more useful approach is to compute bounds for the smallest eigenvalue that will lead to easily testable sufficient conditions for imminent extinction or persistence.

Variational techniques for eigenvalue bracketing have been long used for Sturm-Liouville problems (see for example \cite{Weinberger:1974uq}), and in particular for problems on networks by \cite{Currie:2005p5932}. Here we extend and adapt the latter work to our particular form of operator $\L$.

Our first result relates the eigenvalues of problem \refp{spectral} to those of a similar problem on sub-networks of $\Gamma$. First some notation. For any edge $e \in \Gamma$, denote by $\Gamma_e$ the tree that has $e$ as its root edge. Namely $\Gamma = \Gamma_r$, and if $e$ is a ``leaf edge'' then $\Gamma_e = e$. 

\begin{theorem}\label{ThmSubtree}
Let $e \in \Gamma$, and define functions $p^{(e)},q^{(e)}$ and the operator $\L^{(e)}$ by \refp{pq}, \refp{L} and \refp{DomL} for the subtree $\Gamma_e$. Let $\nuu(\Gamma_e)$ be the smallest eigenvalue of the problem $\L^{(e)}[u] = \nu q^{(e)} u$, $u \in \Dom(\L^{(e)}) \cap B_H(\Gamma_e)$. Then $\nuu(\Gamma) \leq \nuu(\Gamma_e)$.
\end{theorem}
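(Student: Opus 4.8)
The plan is to prove the inequality through the variational characterization of the smallest eigenvalue, exploiting that \refp{spectral} is a self-adjoint generalized Sturm--Liouville problem with strictly positive weight $q$. Writing $a(u,u) = \int_\Gamma p\,(u')^2 + \int_\Gamma q\,u^2$ and $b(u,u) = \int_\Gamma q\,u^2$, the min--max theory for such problems (see \cite{Weinberger:1974uq, Currie:2005p5932}) gives
\[ \nuu(\Gamma) = \min \frac{a(u,u)}{b(u,u)}, \]
the minimum running over all nonzero $u$ in the form domain
\[ \mathcal{H}(\Gamma) = \{ u \in \C(\bar\Gamma) : u|_e \in H^1(e) \text{ for every } e \in \Gamma, \; u(\bphi) = 0 \}. \]
The essential conditions retained in $\mathcal{H}(\Gamma)$ are continuity across nodes and the Dirichlet condition at the outlet $\bphi$; the Neumann conditions at $U(\Gamma)$ and the flux conditions $\derivAD{u}{}(\e)=0$ at internal nodes are \emph{natural}, i.e. recovered automatically by the minimizer and not imposed on competitors. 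In particular $\nuu(\Gamma) \leq a(\tilde u, \tilde u)/b(\tilde u,\tilde u)$ for any single admissible $\tilde u$, and this one-sided bound is all the argument requires.

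First I would take the eigenfunction $u^*$ realizing $\nuu(\Gamma_e)$ on the subtree $\Gamma_e$, so that $u^* \in \Dom(\L^{(e)}) \cap B_H(\Gamma_e)$ and its Rayleigh quotient, formed with $p^{(e)},q^{(e)}$, equals $\nuu(\Gamma_e)$. I then extend it to $\Gamma$ by zero, setting $\tilde u = u^*$ on $\Gamma_e$ and $\tilde u \equiv 0$ on $\Gamma \smallsetminus \Gamma_e$. The decisive observation is that the root of $\Gamma_e$ is the node $(e,0)$, which is precisely the Dirichlet node of the subtree problem; hence $u^*((e,0)) = 0$ and the zero-extension matches $u^*$ continuously at the attachment node. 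Consequently $\tilde u \in \C(\bar\Gamma)$, it is piecewise $H^1$, and $\tilde u(\bphi) = 0$ since $\bphi \notin \Gamma_e$ (the case $e = r$ being trivial). Thus $\tilde u \in \mathcal{H}(\Gamma)$. The possible derivative jump of $\tilde u$ at $(e,0)$ is harmless, as the form only sees $\tilde u' \in L^2$.

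It remains to compare the two Rayleigh quotients. Because the path integral defining $p$ in \refp{pq} is taken from the global root $\bphi$, whereas $p^{(e)}$ is taken from $(e,0)$, on the subtree the two weights differ only by the positive multiplicative constant $\lambda := \exp\{-\int_{\bphi}^{(e,0)} v/D\}$: for $x \in \Gamma_e$ one has $p(x) = \lambda\, p^{(e)}(x)$ and therefore $q(x) = \sigma p(x)/D(x) = \lambda\, q^{(e)}(x)$. Since $\tilde u$ vanishes off $\Gamma_e$,
\[ \frac{a(\tilde u,\tilde u)}{b(\tilde u,\tilde u)} = \frac{\int_{\Gamma_e} p\,[(u^*)']^2 + \int_{\Gamma_e} q\,(u^*)^2}{\int_{\Gamma_e} q\,(u^*)^2} = \frac{\lambda\bigl(\int_{\Gamma_e} p^{(e)}[(u^*)']^2 + \int_{\Gamma_e} q^{(e)}(u^*)^2\bigr)}{\lambda \int_{\Gamma_e} q^{(e)}(u^*)^2} = \nuu(\Gamma_e), \]
the constant $\lambda$ cancelling. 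Combining this with the variational bound yields $\nuu(\Gamma) \leq \nuu(\Gamma_e)$.

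The step I expect to be the main obstacle is not the extension argument---which is clean once one notices that the subtree root carries a Dirichlet condition---but the rigorous justification of the variational characterization itself on the graph: identifying $\mathcal{H}(\Gamma)$ as the correct form domain, verifying that the flux/Kirchhoff and upstream-Neumann conditions are genuinely natural rather than essential, and confirming that the infimum coincides with the smallest eigenvalue of \refp{spectral}. This rests on the symmetry of $\L$ under the conditions defining $\Dom(\L) \cap B_H$ together with the positivity of $q$, and is where I would lean on the cited Sturm--Liouville theory on networks.
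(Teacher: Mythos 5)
Your proposal is correct and takes essentially the same route as the paper: the paper also extends the minimizing eigenfunction on $\Gamma_e$ by zero to all of $\Gamma$ (admissible precisely because $B_H(\Gamma_e)$ imposes the Dirichlet condition at the subtree root), absorbs the discrepancy between $p,q$ and $p^{(e)},q^{(e)}$ through the multiplicative constant $p(\bphi_e)$ (your $\lambda$), and concludes via the variational characterization of $\nuu$ in theorem \refp{ThmVariational}. Your write-up merely makes explicit some details the paper leaves implicit (continuity of the extension at the attachment node, harmlessness of the derivative jump, and the naturality of the flux and upstream Neumann conditions, which is exactly what theorem \refp{ThmVariational} establishes).
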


In particular, for an upstream sub-network $\Gamma_e$ of $\Gamma$, the respective critical reproductive rates satisfy $\rcrit(\Gamma) \leq \rcrit(\Gamma_e)$. This implies that any restriction of the population to an upstream sub-habitat can only icrease the minimum reproductive rate needed for persistence. Also, the following estimate follows from theorem \refp{ThmStability1d},
\begin{equation}\label{upbound}
\nuu(\Gamma) < \min \left\{ 1+\frac{1}{4}\Q_e\Pe_e + \frac{\pi^2}{4}\frac{\Q_e}{\Pe_e},\; e \text{ is a leaf edge}\right\},  
\end{equation}
where $\Pe_e:= \frac{v_e l_e}{D_e}$ and $\Q_e:=\frac{v_e}{\sigma_e l_e}$ are the non-dimensional quantities \refp{nondim} defined now on each edge.

Lower bounds for $\nu_1(\Gamma)$ can be found relating problem \refp{spectral} to the similar spectral problem with Dirichlet (absorbing) boundary conditions on all boundary nodes of $\Gamma$. Define the class of functions
\begin{equation}\label{DomD}
B_D = B_D(\Gamma) = \{f:\Gamma \to \R; \; f(\e) = 0 \text{ for all } \e \in \d\Gamma \}
\end{equation}
and consider the spectral problem
\begin{equation}\label{spectralDir}
u \in \Dom(\L)\cap B_D, \quad \L[u] = \eta q u.
\end{equation}
Much more is known about the properties of the solutions to the spectral problem \refp{spectralDir} than those of problem \refp{spectral}. In particular the set of eigenvalues of \refp{spectralDir} is real, discrete, unbounded, and the smallest eigenvalue $\eta_1(\Gamma)$ has an associated eigenfunction that is strictly positive in the interior of $\Gamma$ \citep{Pokornyi:2004p3263,VonBelow:1988p3428}.

The spectral problems \refp{spectral} and \refp{spectralDir} with hydrological and Dirichlet boundary conditions can be related using variational techniques, and in particular one gets 
$$\nuu(\Gamma) < \eta_1(\Gamma),$$
which doesn't prove to be very useful, since a calculation of $\eta_1(\Gamma)$ is as difficult as that of $\nuu(\Gamma)$. However, the next key lemma offers a different relation between the two problems.

\begin{lemma}\label{LemmaExtend}
There exists a network $\tilde\Gamma$ containing $\Gamma$, and differing from $\Gamma$ only in the length of leaf edges, such that $\eta_1(\tilde \Gamma) \leq \nuu(\Gamma)$.
\end{lemma}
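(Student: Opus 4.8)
The plan is to build $\tilde\Gamma$ explicitly by \emph{prolonging} the leaf edges of $\Gamma$, using the ground state of the hydrological problem \refp{spectral} as a template. Let $u$ realize the smallest eigenvalue, so $u\in\Dom(\L)\cap B_H$ and $\L[u]=\nuu(\Gamma)\,q\,u$. The first thing I would establish is that $u$ may be taken strictly positive in the interior of $\Gamma$, and in particular $u(\e)>0$ at every leaf node $\e\in U(\Gamma)$. This I would obtain from the variational characterization of $\nuu(\Gamma)$ as the infimum over $f\in\Dom(\L)\cap B_H$ of the Rayleigh quotient $R(f)=\big(\int_\Gamma p\,(f')^2+\int_\Gamma q\,f^2\big)\big/\int_\Gamma q\,f^2=1+\int_\Gamma p\,(f')^2\big/\int_\Gamma q\,f^2$; the boundary and flux terms in the integration by parts cancel precisely because $f(\bphi)=0$, $f'(\e)=0$ on $U(\Gamma)$, and the Kirchhoff flux condition holds on $I(\Gamma)$. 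Since a minimizer cannot be constant (it vanishes at $\bphi$ but not identically), this already gives the strict bound $\nuu(\Gamma)>1$. Replacing a minimizer $u$ by $|u|$ leaves $R$ unchanged, so $|u|$ is again a minimizer, hence an eigenfunction; a strong-maximum-principle / ODE-uniqueness argument at interior points, carried across internal nodes through the flux condition, then upgrades $|u|\ge0$ to $u>0$ on the interior.

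With positivity in hand, the heart of the argument is a one-dimensional continuation on each leaf edge. Fix a leaf edge $e$. Because $\nuu(\Gamma)>1$, the eigenvalue equation reads $(p_e u_e')'=-(\nuu(\Gamma)-1)\,q_e\,u_e<0$ wherever $u_e>0$. At the upstream node the Neumann data give $u_e'(l_e)=0$ with $u_e(l_e)>0$, and I would extend $p_e,q_e$ past $l_e$ by their constant edge values (equivalently, keep $v_e,A_e,D_e$ constant along the prolongation) and continue the $\C^2$ solution of the ODE. Since $p_e u_e'$ starts at $0$ and is strictly decreasing, $u_e'$ becomes and stays negative, so $u_e$ decreases and must reach $0$ at a first point $\tilde l_e>l_e$, with $u_e>0$ on $(l_e,\tilde l_e)$. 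Define $\tilde\Gamma$ to be $\Gamma$ with each leaf edge $e$ prolonged to length $\tilde l_e$; by construction $\tilde\Gamma\supset\Gamma$ and the two networks differ only in leaf-edge lengths.

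The extended function $\tilde u$, equal to $u$ on $\Gamma$ and to the continued solutions on the prolongations, then satisfies $\L[\tilde u]=\nuu(\Gamma)\,q\,\tilde u$ on every edge of $\tilde\Gamma$. It lies in $\C(\bar{\tilde\Gamma})\cap\C^2(\tilde\Gamma)$ — the former junction points $l_e$ are now interior points of the prolonged leaf edges where $\tilde u$ is smooth, because both $u_e(l_e)$ and $u_e'(l_e)=0$ were used as matching data — and the flux conditions at internal nodes (see \refp{DomL}) are untouched, while $\tilde u(\bphi)=0$ and $\tilde u(\tilde{\bs e})=0$ at each new leaf tip. Hence $\tilde u\in\Dom(\L)\cap B_D(\tilde\Gamma)$ is a Dirichlet eigenfunction of $\tilde\Gamma$ for the eigenvalue $\nuu(\Gamma)$, and it is strictly positive in the interior of $\tilde\Gamma$. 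To finish I would invoke that $\eta_1(\tilde\Gamma)$ is the unique Dirichlet eigenvalue admitting a sign-definite eigenfunction: eigenfunctions for distinct eigenvalues are $q$-orthogonal, so a strictly positive eigenfunction cannot be $q$-orthogonal to the (likewise positive) Dirichlet ground state unless it \emph{is} that ground state. Therefore $\eta_1(\tilde\Gamma)=\nuu(\Gamma)$, which in particular yields $\eta_1(\tilde\Gamma)\le\nuu(\Gamma)$.

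I expect the main obstacle to be the positivity of the hydrological ground state $u$ across the branching structure: the maximum-principle and zero-counting steps that are routine on a single interval require care at internal nodes, where the Kirchhoff-type flux condition replaces $\C^1$-smoothness, and this is exactly the content the oscillation theory of section \refp{SectOscillation} is meant to supply. By contrast, the continuation step is elementary once $u(\e)>0$ is known, and checking that the node and boundary conditions survive the prolongation — together with the $\C^2$ matching at the old leaf nodes — is routine bookkeeping.
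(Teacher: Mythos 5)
Your proposal has the same overall architecture as the paper's proof: prolong each leaf edge until the continued eigenfunction of problem \refp{spectral} first vanishes, observe that the extension $\tilde u$ is then a Dirichlet eigenfunction on $\tilde\Gamma$ with eigenvalue $\nuu(\Gamma)$, and compare with $\eta_1(\tilde\Gamma)$. Where you differ is in implementation, and the paper's choices are lighter at both ends. For the existence of the zero the paper uses no positivity at all: on a leaf edge every eigenfunction has the closed form $u_e(x)=Be^{\alpha_e x}+Ce^{\beta_e x}$ with $\alpha_e,\beta_e$ as in \refp{alphabeta}, the Neumann condition $u_e'(l_e)=0$ fixes $C/B$, and the first zero past $l_e$ is located explicitly through $\tanh$ (for $\nu\leq\nu^*(e)$, using $a_e>b_e$) or $\tan$ (for $\nu>\nu^*(e)$); this is sign-agnostic, covers the oscillatory regime uniformly with the non-oscillatory one, and makes the degenerate case $u_e\equiv 0$ harmless. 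For the conclusion, the paper needs only that $\eta_1(\tilde\Gamma)$ is the \emph{smallest} Dirichlet eigenvalue: since $\nuu(\Gamma)$ is \emph{some} Dirichlet eigenvalue of $\tilde\Gamma$, the inequality $\eta_1(\tilde\Gamma)\leq\nuu(\Gamma)$ follows at once, with no appeal to sign-definiteness or $q$-orthogonality. Your endgame, if completed, buys the stronger conclusion $\eta_1(\tilde\Gamma)=\nuu(\Gamma)$, but at the cost of the ground-state positivity that you yourself flag as the main obstacle --- a fact the lemma does not need: if $u_e(l_e)<0$ your continuation works with reversed signs, and if $u_e(l_e)=u_e'(l_e)=0$ then ODE uniqueness gives $u_e\equiv 0$ on that edge and no prolongation is required there.

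Two of your steps are not complete as written. First, ``$u_e$ decreases and must reach $0$'' does not follow from monotonicity alone; a decreasing solution could a priori level off at a positive value. You need a quantitative step: once $p_eu_e'(x_0)=-c<0$, monotonicity of $p_eu_e'$ gives $u_e'(x)\leq -c/p_e(x)$ for $x\geq x_0$, and since $p_e(x)=p_e(l_e)e^{-(v_e/D_e)(x-l_e)}$ decays exponentially, $\int^\infty p_e^{-1}\ud x=\infty$ forces $u_e\to-\infty$, hence a first zero (alternatively, just solve the constant-coefficient ODE explicitly, as the paper does). Second, the positivity program --- $|u|$ is again a minimizer, hence an eigenfunction, then strict positivity is propagated across internal nodes via the condition $\derivAD{u}{}(\e)=0$ and a Hopf-type boundary-point argument --- is plausible but only sketched; note also that for $|u|$ to be admissible the Rayleigh quotient must be taken over the form domain $\Dom(\F)$ of \refp{nu1eqinf}, not over $\Dom(\L)\cap B_H$ as you state, since $|u|$ need not be $\C^2$ at zeros of $u$. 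Moreover, section \refp{SectOscillation} does not supply this positivity: its results concern non-oscillation of $(\L,\nu)$ for $\nu<\nu^*(\Gamma)$, and the positivity cited from the literature is for the Dirichlet ground state, not the hydrological one. Since positivity is avoidable as noted above, the cleanest repair is to restructure around the explicit leaf-edge solution (or your sign-agnostic continuation, suitably quantified) and finish by minimality of $\eta_1(\tilde\Gamma)$; your $q$-orthogonality observation can then be kept as a separate remark, contingent on positivity, yielding the equality $\eta_1(\tilde\Gamma)=\nuu(\Gamma)$ rather than serving as a load-bearing step.
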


As in \cite{Pokornyi:2004p3263}, lower bounds for $\eta_1(\tilde \Gamma)$ are possible via considerations about the oscillation of eigenfunctions for problem \refp{spectralDir} on $\tilde \Gamma$. Our main result follows from such techniques.

\begin{theorem}\label{ThmLowBound}
The smallest eigenvalue $\nuu(\Gamma)$ for problem \refp{spectral} satisfies
\begin{equation}
\nu_1(\Gamma) >  \min_{e\in\Gamma} \frac{v_e^2}{4D_e \sigma} + 1, \quad \text{ for all } e\in \Gamma.
\end{equation}
And in particular, in view of \refp{upbound},
\begin{equation}
\min_{e\in \Gamma} \frac{\Pe_e\Q_e}{4+ \Pe_e \Q_e}< \frac{\rcrit}{\mu} < 
\min_{e \text{ is leaf}} \; 1-\frac{4 \Q_e}{\Pe_e \left(\Q_e^2+\pi ^2\right)+4 \Q_e}
\end{equation}

\end{theorem}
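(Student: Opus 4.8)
The plan is to combine Lemma~\refp{LemmaExtend} with a maximum-principle argument for the Dirichlet problem \refp{spectralDir} that exploits the non-oscillation of eigenfunctions below the stated threshold. First, since $\tilde\Gamma$ differs from $\Gamma$ only in the lengths of its leaf edges, the edge quantities $\frac{v_e^2}{4D_e\sigma}$ are identical on $\Gamma$ and $\tilde\Gamma$, so $\min_{e\in\tilde\Gamma}\frac{v_e^2}{4D_e\sigma}=\min_{e\in\Gamma}\frac{v_e^2}{4D_e\sigma}$. By Lemma~\refp{LemmaExtend}, $\eta_1(\tilde\Gamma)\leq\nuu(\Gamma)$, so it suffices to prove that for an arbitrary network $G$ the smallest Dirichlet eigenvalue obeys $\eta_1(G)>\min_{e\in G}\frac{v_e^2}{4D_e\sigma}+1$; applying this to $G=\tilde\Gamma$ then yields the theorem, and the displayed $\rcrit$ estimates follow by inserting the bound into $\rcrit=\mu(1-1/\nuu(\Gamma))$ together with the upper bound \refp{upbound} derived from Theorem~\refp{ThmStability1d}.

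Second, I would reduce the problem on each edge to constant coefficients. On edge $e$ the equation $\L[u]=\eta q u$ with $p,q$ as in \refp{pq} is equivalent to $D_e u_e''-v_e u_e'+(\eta-1)\sigma u_e=0$. The Liouville substitution $W:=u\sqrt p$ removes the drift: on each edge $W$ solves $W_e''+\omega_e^2 W_e=0$ with $\omega_e^2=\frac{(\eta-1)\sigma}{D_e}-\frac{v_e^2}{4D_e^2}$, whose sign changes precisely at $\eta=\frac{v_e^2}{4D_e\sigma}+1$. Below that value $\omega_e^2\leq0$ and $W$ is non-oscillatory; in fact, where $W>0$ one has $W_e''=-\omega_e^2 W_e\geq0$, so $W$ is convex on each edge.

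The key structural step is to show that $W$ satisfies the same boundary and vertex conditions as $u$. Since $p$ is continuous and strictly positive on $\bar\Gamma$, the function $W=u\sqrt p$ lies in $\C(\bar\Gamma)$, vanishes at every boundary node, and is strictly positive in the interior whenever $u$ is (in particular for the positive eigenfunction associated with $\eta_1$). Rewriting the flux condition $\derivAD{u}{}(\e)=0$ in terms of $W$ produces, at each internal node $\e$, two kinds of terms: those carrying $W_e'$ reproduce $A_eD_eW_e'(l_e)-A_{e0}D_{e0}W_{e0}'(0)-A_{e1}D_{e1}W_{e1}'(0)$, while the drift terms carry the coefficient $\tfrac12(A_ev_e-A_{e0}v_{e0}-A_{e1}v_{e1})W(\e)$, which vanishes by the water-flux conservation \refp{consWater}. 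Hence $W$ obeys the identical Kirchhoff condition $\derivAD{W}{}(\e)=0$. This cancellation, which hinges precisely on \refp{consWater}, is the crux of the argument and the step I expect to require the most care.

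Finally I would run the maximum argument. Assume, for contradiction, $\eta_1(G)\leq\min_{e}\frac{v_e^2}{4D_e\sigma}+1$; then $\omega_e^2\leq0$ on every edge, so the transformed positive eigenfunction $W$ is convex on each edge. A convex function attains its edge maximum at a node, so the global maximum of $W$ is attained at some internal node $\e^{\ast}$ (a boundary node is excluded, since there $W=0<\max W$). At $\e^{\ast}$ convexity forces $W_e'(l_e)\geq0$ on the incoming edge and $W_{e0}'(0),W_{e1}'(0)\leq0$ on the two outgoing edges; the condition $\derivAD{W}{}(\e^{\ast})=0$ then forces all three derivatives to vanish. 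A convex function whose derivative vanishes at its maximising endpoint is constant, so $W$ is constant on the incoming edge and carries the maximal value to its downstream node. Iterating toward the root along the finite tree propagates the maximal value down to $\bphi$, where $W(\bphi)=0<\max W$, a contradiction. This forces $\eta_1(G)>\min_e\frac{v_e^2}{4D_e\sigma}+1$, which combined with the first paragraph completes the proof.
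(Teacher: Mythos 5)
Your proof is correct, and while it shares the paper's overall skeleton, the key step is established by a genuinely different route. Like the paper, you use Lemma~\refp{LemmaExtend} to reduce everything to the lower bound $\eta_1(\tilde\Gamma) > \nu^*(\Gamma) := \min_{e}\bigl(1+\tfrac{v_e^2}{4D_e\sigma}\bigr)$ for the Dirichlet problem \refp{spectralDir}, you correctly note $\nu^*(\tilde\Gamma)=\nu^*(\Gamma)$, and you obtain the $\rcrit$ display exactly as the paper does, by feeding the bracketing of $\nuu(\Gamma)$ into $\rcrit=\mu(1-1/\nuu(\Gamma))$ together with \refp{upbound}. Where you diverge is in how $\eta_1(\tilde\Gamma)>\nu^*(\Gamma)$ is proved. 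The paper stays with $u$ itself: for $\nu<\nu^*(\Gamma)$ it constructs, edge by edge from the root, an everywhere positive solution $u\in\Dom(\L)$ of $\L[u]=\nu q u$ from the exponentials $B_e e^{\a_e x}+C_e e^{\b_e x}$ (real exponents precisely when $\nu\leq\nu^*(e)$), using the water balance \refp{consWater} to satisfy $\derivAD{u}{}(\e)=0$ while preserving $B_{ei}>|C_{ei}|$, and then invokes the cited oscillation theory of Pokornyi et al.\ (a positive solution implies $(\L,\nu)$ is non-oscillatory, and $\eta_1$ is the supremum of the non-oscillatory $\eta$). You instead apply the Liouville substitution $W=u\sqrt{p}$ to the principal Dirichlet eigenfunction itself, observe that $W''=-\omega_e^2 W$ with $\omega_e^2\leq 0$ exactly for $\eta\leq\nu^*(e)$ (the same threshold at which the paper's $b_e(\nu)$ vanishes), verify --- correctly, since $u_e'=p_e^{-1/2}\bigl(W_e'+\tfrac{v_e}{2D_e}W_e\bigr)$ and $p$ is continuous at nodes, so the drift terms at an internal node carry the factor $\tfrac12(A_ev_e-A_{e0}v_{e0}-A_{e1}v_{e1})$ --- that $W$ inherits the Kirchhoff condition by \refp{consWater}, and run a convexity/maximum argument down the finite tree to contradict $W(\bphi)=0$. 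Your route is more self-contained: it bypasses the oscillation machinery entirely, needing from the literature only the existence and interior strict positivity of the principal Dirichlet eigenfunction on $\tilde\Gamma$, a fact the paper itself quotes when introducing \refp{spectralDir}; and it yields strictness cleanly, since at $\eta=\nu^*$ the minimizing edge merely makes $W$ linear, which your constancy-propagation argument still rules out. The paper's route, by contrast, never touches the eigenfunction and gives as a byproduct the non-oscillation of $(\L,\nu)$ for every $\nu<\nu^*(\Gamma)$. It is worth noting that both arguments pivot on the same structural identity: conservation of water flux \refp{consWater} is what makes the internal-node condition compatible with positivity in the paper's construction, and what cancels the drift in your transformed Kirchhoff condition --- you rightly flag this as the crux.
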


\section{Advection-diffusion dynamics on graphs}\label{SectionAD} 

For ease of notation, assume that the habitat consists of only three channels: two smaller channels converge to form another one. The mathematical model for this small river network is $\Gamma$, a graph  containg three edges: $r$, $\la 0 \ra$ and $\la 1 \ra$. And three nodes: the point where the streams converge $\bs{r}$, two upstream boundary nodes $\bs{\la 0 \ra}$, $\bs{\la 1 \ra}$, and the outlet $\bphi$.  Each channel is in reality a three-dimensional domain with coordinates $(x,y,z)$ where the longitudinal variable takes values between $[0,l_e]$, $x=0$ corresponding to the downstream end of the channel, and $l_e$ being the length of the channel. 

In full generality, organisms dispersing in a channel $e$ can be quantified by a volumetric concentration $c_e(x,y,z,t)$ given by the number of mobile individuals in a representative unit volume of water. Let $\mathbf{v}_e(x,y,z,t)$ denote the water velocity field. Assuming that the mechanics of mobile individuals follow a linear advection-diffusion process, then there exists a positive-definite diffusion tensor $\mathbf{D}_e(x,y,z,t)$ such that
\begin{equation}\label{3du}
\pderiv{c_e}{t} = \div(\mathbf{D}_e \grad c_e) - \div(\mathbf{v}_e c_e). 
\end{equation}

Several simplifications are in order. First, assume that $\mathbf{v}_e = (-v_e,0,0)$ where $v_e>0$ is constant, and the diffusivity tensor is the constant matrix $\mathbf{D}_e = D_e \mathbf{I}_{3 \times 3}$. The transversal variables $y,z$ vary over the cross-sectional area of the channel which we will assume constant $A_e(x) = A_e$. The concentration of individuals per unit length in edge $e$,
\begin{equation}\label{uc}
u(x,t)  := \int_{A_e} c_e(x,y,z,t) \ud y \ud z
\end{equation}
satisfies the usual one-dimensional advection-diffusion equation 
\begin{equation}\label{PDEu}
\pderiv{u_e}{t} = D_e \pderiv[2]{u_e}{x} + v_e \pderiv{u_e}{x}, \quad x\in e, \; t>0.
\end{equation}

The conditions on $u$ at the intersection of the channels are now derived. In the interior of each channel, the continuity of the concentration $c$ is implied by equation \refp{3du}. The junction of the channels is a three-dimensional volume depicted by the shaded area in Figure \refp{FigJunction}. Assuming that $c$ is constant there, yields the jump condtion of $u$ at node $\bs{r}$, 
\begin{equation}\label{contu}
\frac{u_r(l_r,t)}{A_r}  = \frac{u_{\la 0\ra}(0,t)}{A_{\la 0\ra}} = \frac{u_{\la 1\ra}(0,t)}{A_{\la 1\ra}}
\end{equation}

\begin{figure}
  \includegraphics[scale=0.4]{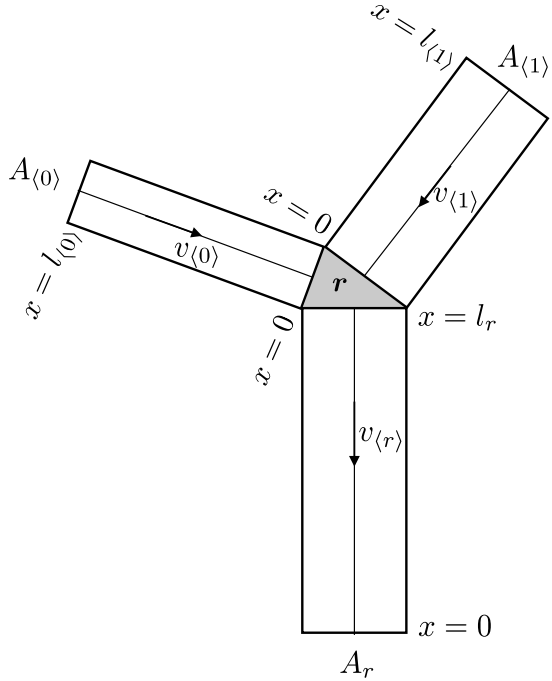}
\caption{Schematic representation of a junction of two channels. The corresponding model is a graph with three edges: $r, \la0\ra$ and $\la 1 \ra$ connected at node $\boldsymbol{r}$.}
\label{FigJunction}       
\end{figure}

Two variables are conserved at the junction: the water flow, and the flux of mobile individuals. The former condition yields
\begin{equation} \label{consWater2}
A_{\la 0 \ra} v_{\la 0 \ra} + A_{\la 1 \ra} v_{\la 1 \ra} = A_{\la r \ra} v_{\la r \ra}.
\end{equation}
Also, if storage and other local population changes are neglected, the total amount of individuals entering the junction volume must be equal to the amount leaving it. Therefore,
\begin{equation}
\sum_{e=\la0\ra,\la1\ra} A_e \left[ D_e \pderiv{c_e}{x} + v_e c_e\right|_{x=0} = A_r \left[ D_r \pderiv{c_r}{x} + v_r c_r\right|_{x=l_r},
\end{equation}
which in view of \refp{consWater2} and \refp{contu} yields
\begin{equation}\label{Fluxu}
D_{\la 0 \ra} \pderiv{u_{\la 0 \ra}}{x}(0) + D_{\la 1 \ra} \pderiv{u_{\la 1 \ra}}{x}(0) = D_{r} \pderiv{u_r}{x}(l_r).
\end{equation}

This may be viewed as an interface condition generalizing notions
of ``one-dimensional skew diffusion''  to a tree graph; see \cite{Appuhamillage:2011uq, Ramirez:2011kx}, and references therein for a discussion of surprising consequences of these and other such interface  conditions on the dispersion of particles across an interface
in one-dimension.

Now we address boundary conditions for equation \refp{PDEu} on the small river network $\Gamma$. First, it is assumed that any mobile organism that reaches the outlet $\bphi$ leaves the habitat forever. In the diffusion process language, this is referred to as an ``absorbing boundary'' and its formalized by imposing
\begin{equation}\label{absorbu}
u(\bphi,t) = u_r(0,t) =0, \quad t>0. 
\end{equation}
  
The upstream boundaries of edges $\la 0\ra$ and $\la 1 \ra$ correspond to the headwaters of the network. Its assumed that no organism crosses those points, namely the total flux is equal to zero,
\begin{equation}\label{reflectu}
A_e \left[D_e \pderiv{u_e}{x} + v_e u_e \right|_{x = l_e} \!\!\!= 0, \quad e=\la0\ra, \la 1 \ra.
\end{equation}

Consider now a general network $\Gamma$. Denote by $(\A^*,\Dom(\A^*))$ the ``forward'' operator defined by \refp{PDEu} acting on those functions in $\C^2(\Gamma)$ that satisfy \refp{contu} for all nodes in $I(\Gamma)$, \refp{absorbu} at $\bphi$, and  \refp{reflectu} for nodes in $U(\Gamma)$. The formal adjoint of $(\A^*,\Dom(\A^*))$ is the operator $(\A,\Dom(\A))$ defined in \refp{A}. Moreover, in \cite{Freidlin:2000vn} it is shown that $\A$ is the infinitesimal generator of a strongly continuous semigroup of linear operators on $\C(\bar \Gamma)$ corresponding to a conservative Markov process $X=\{X(t):t\geq 0\}$ with continuous sample paths. Namely, if $P(y,x,t)$, $x,y\in \Gamma$, $t>0$, denotes the family of transition probability densities defined as the solution of \refp{P}, then the semigroup 
\begin{equation}\label{Tt}
u(y,t) = T_t[h](y) := \int_\Gamma h(x) P(y,x,t) \ud x, \quad t>0,
\end{equation}
is the solution to 
$$\pderiv{u}{t} = \A[u], \quad u(y,0) = h(y).$$
As in the classical case \citep[see][]{Bhattacharya:1990kx}, if one defines the adjoint semigroup by
\begin{equation}\label{Tstar}
T_t^*[f] = \int_\Gamma f(y) P(y,x,t) \ud x, 
\end{equation}
then $\A^*[T_t^*[f]] = f$. Namely, for a suitable initial condition $f\geq0$ with $\int_\Gamma f \ud x =1$, $T_t^*[f]$ given in  \refp{Tstar} gives the evolution of the population densitiy. Moreover, the family of transition probability densities has the following properties: $P(\cdot,x,t) \in \Dom(\A)$, $P(y,\cdot,t) \in \Dom(\A^*)$ for all $x,y \in \Gamma$, $t>0$. As $P_y :=P(y,\cdot,\cdot)$ solves $\pderiv{P_y}{t} = \A^*[P]$, $\lim_{t\to0}P(y,\cdot,t) = \delta_y$ for all $y \in \Gamma$, then for $A\subset \Gamma$, $\int_A P(y,x,t) \ud x$ is equal the fraction of individuals that where initially at $y$ and occupy $A$ at time $t$, namely is equal to $\P(X(t) \in A \large| X(0) = y)$.     

Let $\tau$ be an exponential random variable with mean $1/\sigma$, and independent of the process $X$. Then 
\begin{equation}\label{K2}
\K(y,x) := \P(X(\tau) \in \ud x | X_0 =y) = \int_0^\infty \sigma e^{-\sigma t} P(y,x,t) \ud t 
\end{equation}
is the transition probability density of a jump process on $\Gamma$ starting at $y$.

The definition \refp{K2} for the dispersion kernel $\K$ can be viewed as a time homogenization of the ``microscopical'' dynamics described by the transition probabilities $P$. Namely, the $\K[u]$ term in equation \refp{IDE} describes the dispersion of organisms at time scales comparable to those of population growth. At this scale, the continuous motion individuals in the water column scales-up to a Markov jump process on $\Gamma$ with infinitesimal generator given by the operator on the right hand side of \refp{IDE}.

\section{Analysis of the dispersion kernel} \label{SectAnal}
Stability results for equation \refp{IDE} will follow from relating the dispersion kernel $\K$ to the Sturm-Liouville operator $\L$ defined in \refp{L} with $p$ and $q$ as in \refp{pq}. By the definition \refp{Kxy} of $\K(y,x)$ and the strong continuity of of the semigroup $\{T_t: t \geq 0\}$ it follows that for $f \in \C(\bar \Gamma)$, 
\begin{equation}
\int_\Gamma \K(\cdot,x) \, f(x) \ud x \in \Dom(\A), \quad (\sigma - \A) \left[\int_\Gamma \K(\cdot,x) \, f(x) \ud x \right ] = \sigma f.
\end{equation}
Defining the operator 
\begin{equation}\label{L2}
\L := \frac{p}{D} (\sigma - \A)
\end{equation}
one gets the familiar form of the Sturm-Luiville operator \refp{L} on each edge.

In \cite{Pokornyi:2004p3313} it is shown that for very general boundary and internal node conditions, one can find solutions to Sturm-Liouville problems in geometric graphs. In particular, if $f \in \C(\bar\Gamma)$ then,
$$\L\left[ \int_\Gamma \K(y,x) q(x) f(x) \ud x \right] = f(y).$$
Referring back to the ``hydrological conditions'' $B_H$ in \refp{BH}, one therefore has that the solution $u\in \Dom(\L)\cap B_H$ to $\L[u] = f$ can be found via the Green's function 
\begin{equation}\label{GK}
G(y,x) := q(x) \K(y,x)
\end{equation}
by making
$u(y) = \int_\Gamma G(y,x) f(x) \ud x.$
More specifically, we have the following theorem from which theorem \refp{ThmLK} follows as a corollary.

\begin{theorem}
Let $\K(x,y)$ be as in \refp{Kxy}, $\K[f](x) = \int_\Gamma \K(y,x) f(y) \ud y$, and $\L$ defined on $\Dom(\L)$ as in \refp{L}, \refp{DomL}. 
\begin{enumerate}
\item $\K(\cdot,y) \in \Dom(\A^*)$, $\K(y,\cdot) \in \Dom(\L)\cap B_H = \Dom(\A)$.
\item The kernel $\K$ satisfies the following symmetry condition
\begin{equation}\label{symK}
p(x)A(x)\K(x,y) = p(y)A(y)\K(y,x).
\end{equation}
\item Let $f \in \C(\bar\Gamma)$, then the function $u:= \K[f]$ is such that $\frac{1}{pA}u \in \Dom(\L)\cap B_H$, and $\L[\frac{1}{pA}u] = \frac{1}{AD}f$. 
\item $u \in \C(\bar \Gamma)$ satisfies $\K[u] = \omega u$ if and only if $v:= \frac{1}{pA}u$ satisfies $\L[v] = \frac{1}{\omega} v$.
\end{enumerate}
\end{theorem}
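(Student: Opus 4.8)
The plan is to deduce parts (2)--(4) from two structural facts about the kernel \refp{Kxy}: that $\K$ is, up to the factor $\sigma$, the kernel of the resolvent of $\A$ (acting in the backward variable) and of $\A^*$ (acting in the forward variable); and that $\A$ is self-adjoint on $\Dom(\A)$ with respect to a weighted measure on $\Gamma$. For part~(1) I would inherit both domain memberships directly from the transition kernel $P$, whose properties $P(\cdot,x,t)\in\Dom(\A)$ and $P(y,\cdot,t)\in\Dom(\A^*)$ are recorded in Section~\refp{SectionAD}. Each condition defining $\Dom(\A)$ in \refp{DomA} -- continuity in $\C(\bar\Gamma)$, the $\C^2(\Gamma)$ regularity, the Dirichlet condition at $\bphi$, the Neumann conditions at $U(\Gamma)$, and the flux balance $\derivAD{f}{}(\e)=0$ at $I(\Gamma)$ -- is a \emph{linear} constraint, and these are preserved under the average $\int_0^\infty\sigma e^{-\sigma t}(\cdot)\ud t$ once one may commute the relevant point evaluations and one-sided derivatives with the $t$-integral. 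The only delicate point is the interchange with differentiation near $t=0$; I would control it with the exponential weight $e^{-\sigma t}$ and the parabolic smoothing of $P$ off the diagonal. The same argument applied in each of the two variables yields the two memberships of part~(1).

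The core of the theorem is the symmetry \refp{symK}, which I read as detailed balance. The plan is to prove first that $\A$ is symmetric on $\Dom(\A)$ for the weighted inner product $\langle f,g\rangle_m:=\int_\Gamma f g\,pA\ud x$, with $p$ as in \refp{pq}. This is a Lagrange-identity computation: on an edge $e$, writing $W:=g f_e'-f g_e'$ and using $p_e'=-\tfrac{v_e}{D_e}p_e$, the interior terms cancel and one is left with $\int_0^{l_e}(g\,\A[f]-f\,\A[g])\,pA\ud x=A_eD_e\,[\,p_eW\,]_0^{l_e}$. Summing over edges, the endpoints at $\bphi$ and at the leaves in $U(\Gamma)$ drop out because $f,g\in B_H$, while at each $\e\in I(\Gamma)$ the three incident boundary terms -- after using continuity of $f$, $g$, and of $p$ across $\e$ -- collapse into $p(\e)\big(g(\e)\,\derivAD{f}{}(\e)-f(\e)\,\derivAD{g}{}(\e)\big)=0$ by the flux condition of \refp{DomA}. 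Symmetry of the generator on the core $\Dom(\A)$ then makes the semigroup self-adjoint in $L^2(pA\ud x)$, so that $P$ is reversible, $p(y)A(y)P(y,x,t)=p(x)A(x)P(x,y,t)$; multiplying by $\sigma e^{-\sigma t}$ and integrating in $t$ produces exactly \refp{symK}. \textbf{This is where I expect the main obstacle.} Two points need care: that the junction cancellation uses \emph{precisely} the flux balance in \refp{DomA} (and that $p$ is continuous across each node, which holds since it is built from a path integral on the tree), and that symmetry of $\A$ on $\Dom(\A)$ upgrades to self-adjointness of the semigroup -- equivalently reversibility of $P$. The latter rests on $\Dom(\A)$ being a core, for which compactness of the resolvent of $\A$ on the finite graph $\Gamma$ is the needed input.

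With \refp{symK} available, parts~(3) and (4) are short. Given $f\in\C(\bar\Gamma)$ and $u=\K[f]$, I would use \refp{symK} to write $\K(y,x)=\tfrac{p(x)A(x)}{p(y)A(y)}\K(x,y)$, so that $v:=\tfrac1{pA}u$ satisfies $v(x)=\int_\Gamma\K(x,y)\,g(y)\ud y$ with $g:=\tfrac{f}{pA}$, which is of the form to which the identity $(\sigma-\A)\big[\int_\Gamma\K(\cdot,y)g(y)\ud y\big]=\sigma g$ (stated just before \refp{L2}) applies. Hence $v\in\Dom(\A)=\Dom(\L)\cap B_H$, and \refp{L2} gives $\L[v]=\tfrac{p}{D}(\sigma-\A)[v]=q\,g$, i.e. the identity $\L[\tfrac1{pA}u]=\tfrac{1}{AD}f$ of part~(3). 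Part~(4) follows by specialization: if $\K[u]=\omega u$, then taking $f=u$ and using $u=pA\,v$ turns part~(3) into $\L[v]=\tfrac1\omega\,q\,v$, i.e. $v$ is a $q$-eigenfunction of $\L$ on $B_H$ with eigenvalue $\nu=1/\omega$ (the eigenvalue statement of part~(4)); running the same identities in reverse gives the converse. This is precisely the correspondence of Theorem~\refp{ThmLK}, and it closes the proof.
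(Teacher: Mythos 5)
Your proposal is correct in substance and reaches all four conclusions, but your route to the symmetry \refp{symK} is genuinely different from the paper's, and heavier than it needs to be. The paper never touches the semigroup: it proves the Lagrange identity at the level of the Sturm--Liouville operator --- self-adjointness of $\L$ with respect to $\ud AD$ on $\Dom(\L)\cap B_H$, displayed as \refp{selfadjoint}, which by $\L = \frac{p}{D}(\sigma-\A)$ is algebraically the \emph{same} junction-flux cancellation as your symmetry of $\A$ in $L^2(pA\ud x)$ --- and then transfers it to $\K$ through the stationary Green's function $G(y,x)=q(x)\K(y,x)$ of \refp{GK}, by testing against arbitrary $f,g \in \C_0^\infty(\Gamma)$ and solving $\L[u]=f$, $\L[v]=g$. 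Your detour through self-adjointness of the semigroup and reversibility of $P(y,x,t)$ is the one place where you take on real extra work: upgrading symmetry of $\A$ on $\Dom(\A)$ to self-adjointness of the semigroup requires the core/essential-self-adjointness step you yourself flag (compactness of the resolvent), and the paper avoids this entirely by staying at the resolvent level. Indeed, the identity $(\sigma-\A)\bigl[\int_\Gamma \K(\cdot,x)f(x)\ud x\bigr]=\sigma f$ that you already quote lets you shortcut your own argument: with $u=(\sigma-\A)^{-1}[f]$ and $v=(\sigma-\A)^{-1}[g]$, symmetry of $\A$ on $\Dom(\A)$ alone gives $(f,v)$ equal to $(u,g)$ in the $pA\ud x$ inner product, and hence the kernel symmetry, with no mention of $t$ and no core argument. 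What your heavier route buys is a strictly stronger conclusion --- detailed balance $p(y)A(y)P(y,x,t)=p(x)A(x)P(x,y,t)$ for \emph{each} $t$, rather than only for the $\sigma$-exponential time average $\K$ --- which is a nice probabilistic fact but is not needed for the theorem. Your treatment of parts (1), (3) and (4) matches the paper's (part (1) is likewise inherited from the stated properties of $P$; parts (3)--(4) follow from \refp{selfadjoint}, \refp{GK} and \refp{L2}). One small remark worth recording: your computation actually yields $\L[\tfrac{1}{pA}u]=qg=\tfrac{\sigma}{AD}f$ in part (3) and $\L[v]=\tfrac{1}{\omega}\,q\,v$ in part (4); the factor $\sigma$ and the weight $q$ are silently dropped in the theorem's literal statement (and glossed over in your write-up when you equate $qg$ with $\tfrac{1}{AD}f$), but your corrected version is the one consistent with theorem \refp{ThmLK}, so this reflects typos in the statement rather than an error in your argument.
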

\begin{proof}
The first assertion follows from the definition of $\K(y,x)$. For (\textit{ii}) note first that $\L$ is self-adjoint with respect to $\ud AD$ in $\Dom(\L)\cap B_H$. Namely, 
\begin{equation}\label{selfadjoint}
\int_\Gamma v \, \L[u] \ud AD = \int_\Gamma u \, \L[v] \ud AD, \quad u,v \in \Dom(\L) \cap B_H.
\end{equation}
Let $f,g \in \C_0^\infty(\Gamma)$, and consider the solutions $u$ and $v$ to $\L[u] = f$ and $\L[v] = g$ respectively. Using \refp{selfadjoint} yields
$$\int_\Gamma \int_\Gamma G(x,y) g(y)f(x) A(x)D(x) \ud x \ud y = 
\int_\Gamma \int_\Gamma G(y,x) g(y)f(x) A(y)D(y) \ud x \ud y. $$
Since $f,g$ are arbitrary, the definition \refp{GK} gives the desired result. For (\textit{iii}), use \refp{selfadjoint} in $u = \K[f]$ to get,
$$\int_\Gamma \K(y,x) \frac{u(x)}{p(x)A(x)} \ud x = \frac{f(y)}{p(y)A(y)}$$
and use \refp{GK} to arrive at an expression involving $G$. The statement in (\textit{iv}) follows directly from (\textit{iii}).
\end{proof}

\section{Variational approach}\label{SectVariational}

Consider the measure $\ud AD$ on $\Gamma$ with density $\ud AD = A_e D_e \ud x$ on edge $e$, and let $L^2_{A_eD_e}(e)$ denote the space of square-integrable functions in $(0,l_e)$ with respect to the measure $A_eD_e \ud x$. Also, let $H^1(e)$ denote the Sobolev space of functions such that themselves and their first generalized derivative lie in $L^2_{AD}(e)$. The analog spaces on all of $\Gamma$ are given as direct sums:
$$L_{AD}^2(\Gamma) := \bigoplus_{e\in \Gamma} L_{A_eD_e}^2(e), \quad 
H^1(\Gamma) := \bigoplus_{e\in \Gamma} H^1(e)$$
The real inner product in $L^2_{AD}$ is denoted by $(\cdot,\cdot)_{AD}$.

 The operator $\L$ in \refp{L} can be extended to $L_{AD}^2(\Gamma)$ by replacing derivatives by generalized derivatives where necessary, an its domain is
\begin{equation}
\Dom(\L) = \left\{ f \in \C(\bar \Gamma) \cap H^1(\Gamma);\; pu' \in H^1(\Gamma),\;\derivAD{f}{}(\e) = 0, \; \e \in I(\Gamma)\right\}.
\end{equation}

We are interested in solving the following spectral problem
\begin{equation}\label{spectral2}
u \in \Dom(\L) \cap B_H, \quad \L[u] = \nu q u,
\end{equation}
and in particular, finding the smallest number $\nuu(\Gamma)$ for which a nontrivial solution $u$ of \refp{spectral2} exists.

Extending the results of \cite{Currie:2005p5932}, one can reformulate problem \refp{spectral} in variational form. Define the associated bilinear form
\begin{equation}\label{F}
\F(u,v) = \int_\Gamma p u' v' + q uv \ud AD, \quad u,v \in \Dom(\F).
\end{equation} 
The domain of $\F$ is given by
\begin{equation}\label{DomF}
\Dom(\F) = \{u \in H^1(\Gamma) \cap \C(\bar \Gamma); \; u(\bphi) = 0\}.
\end{equation}

\begin{theorem}\label{ThmVariational}
\begin{enumerate}
\item A function $u$ belongs to $\Dom(\L)\cap B_H$ and solves $\L[u] = \nu q u$ if and only if $u\in \Dom(\F)$ and $\F(u,v) = \nu (qu,v)_{AD}$ for all $v \in \Dom(\F)$.
\item The smallest $q$-eigenvalue of $\L$ is
\begin{equation}\label{nu1eqinf}
\nu_1(\Gamma) = \inf_{v \in \Dom(\F)} \frac{\F(v,v)}{(qv,v)_{AD}}.
\end{equation}
\end{enumerate}
\end{theorem}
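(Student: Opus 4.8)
The plan is to handle the two parts in turn, with an integration-by-parts identity carrying part (\textit{i}) and a direct minimization carrying part (\textit{ii}). For the forward implication of (\textit{i}), I would first record that $p$, $A$ and $D$ are continuous at every node (for $p$ because the path integral in \refp{pq} is independent of which incident edge is used to reach the node) and that $A_eD_e$ is constant on each edge. Integrating by parts $\int_e p u'v'\ud AD = A_eD_e\int_e pu'v'\,\mathrm{d}x$ edge by edge and summing yields
\[
\F(u,v) = \sum_{e\in\Gamma} A_eD_e\bigl[\,p\,u'\,v\,\bigr]_0^{l_e} + (\L[u],v)_{AD}.
\]
I would then regroup the boundary sum node by node: at $\bphi$ the single term vanishes since $v(\bphi)=0$ for $v\in\Dom(\F)$; at each internal node $\e\in I(\Gamma)$ the parent and two child terms combine, using continuity of $p$ and $v$, into $p(\e)v(\e)\,\derivAD{u}{}(\e)$, which is zero because $u\in\Dom(\L)$; and at each upstream node $\e\in U(\Gamma)$ the lone leaf term is a multiple of $u'(\e)$, which vanishes because $u\in B_H$. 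Hence the boundary sum drops out and $\L[u]=\nu qu$ gives $\F(u,v)=\nu(qu,v)_{AD}$.

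For the converse I would first test the weak identity against $w\in\C_0^\infty(e)$ supported in the interior of a single edge; the boundary sum is then empty, so $-(pu')'+qu=\nu qu$ holds distributionally on each edge, which upgrades to $pu'\in H^1(e)$ and the pointwise equation by standard elliptic regularity. Substituting this back into the identity above, the two bulk terms cancel for arbitrary $v\in\Dom(\F)$ and only the node sum survives; demanding that it vanish for all such $v$, and using that the values $v(\e)$ at internal and upstream nodes may be chosen freely (only $v(\bphi)$ is constrained), forces $\derivAD{u}{}(\e)=0$ at every internal node and $u'(\e)=0$ at every upstream node. Thus $u\in\Dom(\L)\cap B_H$ and solves the eigenvalue equation, which finishes (\textit{i}).

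For (\textit{ii}), write $R(v):=\F(v,v)/(qv,v)_{AD}$. If $u_1$ is an eigenfunction for the smallest eigenvalue $\nu_1$, part (\textit{i}) gives $\F(u_1,u_1)=\nu_1(qu_1,u_1)_{AD}$, so $R(u_1)=\nu_1$ and $\inf_v R(v)\leq\nu_1$. For the reverse inequality I would show the infimum $\lambda:=\inf_v R(v)$ is attained. Since $p,q,A,D$ are bounded between positive constants on the finite graph, $\F(v,v)$ is comparable to the $H^1(\Gamma)$ norm and $(qv,v)_{AD}$ to $(v,v)_{AD}$; so a minimizing sequence normalized by $(qv_n,v_n)_{AD}=1$ is bounded in $H^1(\Gamma)$, and by the compact embedding $H^1(\Gamma)\hookrightarrow L^2_{AD}(\Gamma)$ (Rellich on each edge) a subsequence converges weakly in $H^1$ and strongly in $L^2_{AD}$ to some $u_*\in\Dom(\F)$. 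Strong $L^2_{AD}$ convergence gives $(qu_*,u_*)_{AD}=1$ (so $u_*\neq 0$) and weak lower semicontinuity of $\F$ gives $\F(u_*,u_*)\leq\liminf\F(v_n,v_n)=\lambda$, whence $R(u_*)=\lambda$. The Euler--Lagrange condition for the minimizer is exactly $\F(u_*,v)=\lambda(qu_*,v)_{AD}$ for all $v\in\Dom(\F)$, so by part (\textit{i}) $u_*$ is an eigenfunction with eigenvalue $\lambda$; as $\nu_1$ is the smallest eigenvalue, $\lambda\geq\nu_1$, and therefore $\nu_1=\lambda=\inf_v R(v)$.

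The main obstacle I anticipate is the compactness step: one must check that $\Dom(\F)$ is weakly closed in $H^1(\Gamma)$, i.e. that the continuity conditions at the nodes and the root condition $u(\bphi)=0$ survive passage to weak $H^1$ limits. This holds because point evaluation at an edge endpoint is a bounded linear functional on $H^1(e)$, so weak convergence preserves the matching of one-sided nodal values and the vanishing at $\bphi$, while finiteness of the graph keeps all embedding constants uniform. An alternative to the direct minimization, once completeness of the $q$-orthogonal eigenfunction system in $\Dom(\F)$ is granted from the cited Sturm--Liouville theory, is to expand $v=\sum_n c_n u_n$ and read off $R(v)=\sum_n\nu_n c_n^2\,/\,\sum_n c_n^2\geq\nu_1$ directly.
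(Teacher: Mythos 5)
Your proposal is correct and follows essentially the same route as the paper: part (\textit{i}) is proved by the same integration-by-parts identity with node-by-node regrouping of the boundary sum and interior test functions in $\C_0^\infty(\Gamma)$ to recover the edge equation and the conditions $\derivAD{u}{}(\e)=0$, $u'(\e)=0$, and part (\textit{ii}) is the standard Rayleigh-quotient characterization, which the paper simply delegates to \cite{Weinberger:1974uq} (p.\ 38). The only difference is one of completeness, not of method: you spell out the direct-method details the citation covers (normalized minimizing sequence, compactness of $H^1(e)\hookrightarrow L^2_{AD}(e)$ on each edge, weak closedness of $\Dom(\F)$ via boundedness of point evaluations in $H^1(e)$, weak lower semicontinuity, and the Euler--Lagrange identity feeding back into part (\textit{i})), all of which is sound for this finite graph.
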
 

\begin{proof}
Let $u \in \Dom(\F)$ with  $\F(u,v) = \nu (qu,v)_{AD}$ for all $v \in \Dom(\F)$. Consider $\C_0^\infty(\Gamma) = \oplus_{e\in \Gamma} \C_0^\infty(e)$, the space of smooth functions on each edge with zero values at each node of $\Gamma$. Clearly, $\C_0^\infty(\Gamma) \subset \Dom(\F)$ and is dense in $L^2_{AD}(\Gamma)$. Consider $\F(u,\vp)$ for $\vp \in \C_0^\infty(\Gamma)$. The extension of this functional to an operator in $L^2_{AD}(\Gamma)$ is $\F(u,\cdot) = -(pu')'+qu = \nu q u$. In particular, $pu' \in H^1(\Gamma)$, and since $p$ is smooth, $u'$ can be extended continuously to nodes in $U(\Gamma)$. For arbitrary $v\in \Dom(\F)$, integration by parts yields
$$\F(u,v) = (-(pu')'+qu,v)_{AD} + \sum_{e\in \Gamma} p_e v_e u_e' A_eD_e \Big|_0^{l_e}.$$
Since the the term on the left hand side and the first term on the right both equal $\nu(qu,v)_{AD}$, the summation on the right must be equal to zero for all $v \in \Dom(\F)$. This yields $\derivAD{u}{}(\e) = 0$ for all $\e \in I(\Gamma)$, and $u'(\e) = 0$ for $\e \in U(\Gamma)$. The converse statement in (\textit{i}) follows by integration by parts.
Part (\textit{ii}) follows from standard arguments noting that \refp{nu1eqinf} implies the positive-definitness of the form $\F(\cdot,\cdot)-(q\cdot,\cdot)_{AD}$ (see \cite{Weinberger:1974uq}, page 38). 
   
\end{proof}

Recall the definition of a subtree $\Gamma_e \subseteq \Gamma$ with root edge $e$ given just before theorem \refp{ThmSubtree}. Let $\bphi_e$ be the root node of $\Gamma_e$ and define $p^{(e)}, q^{(e)}$ as in \refp{pq} where integration is taken with $\bphi_e$ instead $\bphi$ as lower limit. The operator $\L^{(e)}$ is then defined via \refp{L} and \refp{DomL} on the graph $\Gamma_e$.

\begin{proof}\textit{of theorem \refp{ThmSubtree}}. Let $\nuu(\Gamma_e)$ and $u^{(e)}\in \Dom(\L^{(e)}) \cap B_H(\Gamma_e)$ be the smallest eigenvalue, and a corresponding eigenfunction of problem \refp{spectral} on $\Gamma_e$. Denote by $\F^{(e)}$ the analogous bilinear form of \refp{F} and \refp{DomF} on the subtree $\Gamma_e$. By theorem \refp{ThmVariational},
$$\F^{(e)}(u^{(e)},u^{(e)}) = \nuu(\Gamma_e) (q^{(e)}u^{(e)},u^{(e)})_{AD(\Gamma_e)},$$
where $(\cdot,\cdot)_{AD(\Gamma_e)}$ is the inner product on $L^2_{AD}(\Gamma_e)$. 
Both sides of last equation can be multiplied by the constant $p(\bphi_e)$, and the function $u^{(e)}$ extended to all $\Gamma$ via $u:= u_e \IND{\Gamma_e} \in \Dom(\F)$, yielding $\F(u,u) = \nuu(\Gamma_e) (qu,u)_{AD}$, therefore $\nuu(\Gamma_e) \geq \nuu(\Gamma)$.
\end{proof}

Recall the spectral problem with Dirichlet boundary conditions introduced in \refp{spectralDir}. As a last application of the variational formulation, we now give an eigenvalue bracketing result similar to that of \cite{Currie:2005p5932}.

\begin{proposition}\label{PropInf}
Let $\eta_1(\Gamma)$ be the smallest eigenvalue of problem \refp{spectralDir}. Then $\nuu(\Gamma) < \eta_1(\Gamma)$.
\end{proposition}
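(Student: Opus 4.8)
The plan is to read off both eigenvalues as infima of the same Rayleigh quotient over nested test spaces, get the non-strict bound $\nuu(\Gamma)\le\eta_1(\Gamma)$ from domain monotonicity, and then upgrade to strict inequality using the strict positivity of the principal Dirichlet eigenfunction together with uniqueness for the edgewise ODE.

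First I would record the Dirichlet analogue of Theorem \refp{ThmVariational}. Running the integration-by-parts argument of that theorem verbatim, but now imposing the \emph{essential} condition $u(\e)=0$ at every upstream leaf $\e\in U(\Gamma)$ rather than letting the natural condition $u'(\e)=0$ emerge, yields
$$\eta_1(\Gamma) = \inf_{v \in \Dom(\F)\cap B_D} \frac{\F(v,v)}{(qv,v)_{AD}},$$
where $\Dom(\F)\cap B_D = \{v \in H^1(\Gamma)\cap \C(\bar\Gamma);\ v(\e)=0 \text{ for all } \e\in\partial\Gamma\}$ consists of those elements of $\Dom(\F)$ that additionally vanish at the upstream leaves. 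The non-strict bound is then immediate: since $\Dom(\F)\cap B_D \subset \Dom(\F)$, the infimum of the Rayleigh quotient over the larger class $\Dom(\F)$ can only decrease, so $\nuu(\Gamma)\le\eta_1(\Gamma)$.

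For strictness I would argue by contradiction. Suppose $\nuu(\Gamma)=\eta_1(\Gamma)$, and let $w_1$ be a first Dirichlet eigenfunction, which by the graph Sturm-Liouville theory cited just after \refp{spectralDir} may be taken strictly positive in the interior of $\Gamma$. Then $w_1 \in \Dom(\F)\cap B_D \subset \Dom(\F)$ and its Rayleigh quotient equals $\eta_1(\Gamma)=\nuu(\Gamma)=\inf_{v\in\Dom(\F)}\F(v,v)/(qv,v)_{AD}$, so $w_1$ minimizes the quotient over all of $\Dom(\F)$. A minimizer is a critical point, so $\F(w_1,v)=\nuu(\Gamma)(qw_1,v)_{AD}$ for every $v\in\Dom(\F)$, and part (\textit{i}) of Theorem \refp{ThmVariational} then places $w_1\in\Dom(\L)\cap B_H$ with $\L[w_1]=\nuu(\Gamma)\,qw_1$. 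In particular $w_1\in B_H$ forces $w_1'(\e)=0$ at each $\e\in U(\Gamma)$, while $w_1\in B_D$ gives $w_1(\e)=0$ there as well. On a leaf edge $e$ the coefficients $p_e,q_e$ are smooth and $w_1$ solves the regular equation $-(p_e w_1')'+q_e w_1 = \nuu(\Gamma)\,q_e w_1$; the zero Cauchy data $w_1(\e)=w_1'(\e)=0$ at $\e=(e,l_e)$ then forces $w_1\equiv 0$ on $e$ by uniqueness, contradicting strict positivity. Hence $\nuu(\Gamma)<\eta_1(\Gamma)$.

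The crux is precisely the passage from $\le$ to $<$: the inclusion of test spaces gives only the non-strict bound, and strictness genuinely requires extra input. The clean input is the strict positivity of the principal Dirichlet eigenfunction, which converts the hypothetical ``coincidence of minimizers'' into a function with vanishing Cauchy data on a leaf edge and, via ODE uniqueness, into the trivial function. The only routine technical point to verify is that the variational characterization over $\Dom(\F)\cap B_D$ really reproduces $\eta_1(\Gamma)$ with the \emph{same} form $\F$ and weight $q$, which is immediate from Theorem \refp{ThmVariational} since only the essential boundary condition at the upstream leaves has changed.
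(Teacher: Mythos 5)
Your proof is correct and follows the paper's own route: the Dirichlet analogue of the variational characterization of Theorem \refp{ThmVariational} plus domain monotonicity of the Rayleigh quotient over $\Dom(\F)\cap B_D \subset \Dom(\F)$ (note the paper's displayed inequality is typeset with the direction reversed, but the intent is exactly your step). The only difference is that where the paper dismisses strictness with ``Clearly, equality cannot hold,'' you supply a complete argument --- if equality held, the strictly positive principal Dirichlet eigenfunction would minimize over all of $\Dom(\F)$, hence by Theorem \refp{ThmVariational}(\textit{i}) lie in $B_H$, acquire vanishing Cauchy data at the upstream leaf nodes, and vanish on a leaf edge by ODE uniqueness --- which fills in a step the paper leaves unjustified.
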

\begin{proof}
Consider the following bilinear form $\F_D(u,v) = \F(u,v)$ for $u,v \in \Dom(\F_D) = \Dom(\F) \cap B_D$. An argument similar to that of theorem \refp{ThmVariational} characterizes $\eta_1(\Gamma)$ in terms of $\F_D$. In particular,
$$\eta_1(\Gamma) = \inf_{v \in \Dom(\F_D)} \frac{\F_D(v,v)}{(qv,v)_{AD}}
\leq \inf_{v \in \Dom(\F)} \frac{\F_(v,v)}{(qv,v)_{AD}} = \nuu(\Gamma)$$
since $\Dom(\F_D) \subseteq \Dom(\F)$. Clearly, equality cannot hold.
\end{proof}

\section{Eigenfunctions and oscillation}\label{SectOscillation}
We now compute the particular form of eigenfunctions to problem \refp{spectral}, relate them to eigenfunctions of problem \refp{spectralDir}, and use results in oscillation theory to show the lower bound estimate in theorem \refp{ThmLowBound}.

Let $m$ denote the number of edges of $\Gamma$. There are a total of $m+1$ nodes, distributed as $\#I(\Gamma) = \frac{m-1}{2}$ internal nodes, $\#U(\Gamma) = \frac{m+1}{2}$ upstream boundary nodes, and one root node. The continuity, flux-matching and boundary conditions in $\Dom(\L)\cap B_H$ correspond to a total of
$$3 \#I(\Gamma) + \#U(\Gamma) + 1 = 2m$$ 
linear functionals, say $\{\psi_i; \, i=1,\dots,2m\}$. Namely, a function $f$ belongs to $\Dom(\L) \cap B_H$ if and only if $\psi_i[f] = 0$ for all $i$. Also, if $e$ is the $i$-th edge, every solution to $(-pu_e')'+qu_e = \nu q u_e$ on $e$ and extended as zero to all of $\Gamma$, is a linear combination of
\begin{equation}
h_{2i-1}(x;\nu) := e^{\alpha_e x} \IND{e}(x), \quad h_{2i}(x;\nu):= e^{\beta_e x} \IND{e}(x), \quad i=1,2,\dots,m
\end{equation}
where 
\begin{equation}\label{alphabeta}
\alpha_e := \frac{v_e+\sqrt{v_e^2-4D_e \sigma(\nu-1)}}{2D_e}, \quad 
\beta_e := \frac{v_e-\sqrt{v_e^2-4D_e \sigma(\nu-1)}}{2D_e}.
\end{equation}
Consider the matrix $\Delta$ with entries
\begin{equation}
\Delta_{i,j}(\nu) = \psi_i(h_j(\cdot,\nu)), \quad i,j=1,\dots,2m.
\end{equation}
The function $\det(\Delta(\nu))$ is analytic in $\nu$. Moreover, $\nu$ is an eigenvalue for problem \refp{spectral} if and only if $\nu$ is a root of $\det(\Delta(\nu))=0$, and $C = (C_1,\dots,C_{2m}) \in \text{Ker}(\Delta)$ is such that 
\begin{equation}
u(x;\nu) = \sum_{j=1}^{2m} C_j h_j(x;\nu) 
\end{equation}
is nonzero. In this case, $u(\cdot;\nu)$ is an eigenfunction corresponding to $\nu$ (see \cite{Pokornyi:2004p3313}).

Lemma \refp{LemmaExtend} follows from analyzing the particular form of $u(x;\nu)$ at leaf edges. Fro the proof, we need some additional notation. First, denote by 
\begin{equation}
a_e := \frac{v_e}{2D_e}, \quad b_e(\nu) := \frac{\sqrt{|v_e^2-4D_e \sigma(\nu-1)|}}{2D_e}, \quad e \in \Gamma.
\end{equation}
Also, recall the proof of theorem \ref{ThmStability1d} and define
\begin{equation}
\nu^*(e) := 1+\frac{v_e}{4D_e\sigma}= 1+\frac{1}{4}\Q_e\Pe_e.
\end{equation}
Then $b_e(\nu^*(e)) = 0$ and $\alpha_e = a_e + b_e$, $\beta_e = a_e - b_e$ for $\nu \leq \nu^*(e)$; $\alpha_e = a_e + i b_e$, $\beta_e = a_e - i b_e$ for $\nu > \nu^*(e)$. 
 
\begin{proof}\textit{of Lemma \refp{LemmaExtend}.}
Let $e$ be a leaf edge. A solution $u$ to problem \refp{spectral} is given on $e$ by $u_e(x;\nu) = B e^{\alpha_e x} + C e^{\beta_e x}$ for some constants $B,C$. Using $u_e'(l_e)=0$ yields $C = -B \frac{\alpha_e}{\beta_e} e^{l_e(\alpha_e-\beta_e)}$. In particular $u_e(x;\nu) = 0$ for some $x>0$, if and only if $\frac{b_e}{a_e} = F_\nu(b_e(l_e-x))$ where $F_\nu = \tanh$ for $\nu\leq \nu(e)^*$ and $F_\nu = \tan$ for $\nu > \nu^*(e)$. Since $a_e,b_e>0$ for all $\nu$, and $a_e>b_e$ for $\nu \in [1,\nu^*(e)]$, then there exists $\tilde l_e > l_e$ such that $u_e(\tilde l_e;\nu) = 0$.

Let $\tilde \Gamma$ be the network obtained by extending all leaf edges of $\Gamma$ to their corresponding $\tilde l_e$. Define $\tilde p, \tilde q$, and the operator $\tilde \L$ on $\tilde \Gamma$ via \refp{pq} and \refp{L}. Let $u$ be an eigenfunction corresponding to $\nu_1(\Gamma)$ and $\tilde u$ its extension to $\tilde \Gamma$. Then $\tilde u \in B_D(\tilde \Gamma)$, and $\tilde \L[ \tilde u] = \nuu(\Gamma) \tilde q \tilde u$. Proposition \refp{PropInf} gives $\eta_1(\tilde \Gamma) \leq \nuu(\Gamma)$.
\end{proof}

We now need to introduce some language of oscillation theory for eigenfunctions on networks. To fix ideas, recall that under very general conditions, the eigenfunction of the $k$-th eigenvalue for the one-dimensional Sturm-Liouville problem $\L u = \nu q u$ on $[0,l]$ with $u(0)=u(l) =0$, has exactly $k-1$ zeros in $(0,l)$.  For the analogy of this theory to networks we follow the notation and essential results from \cite{Pokornyi:2004p3313} adapted to the case of tree networks. 

\begin{definition}
An S-zone of a function $f:\Gamma \to \R$ in $\C(\bar\Gamma)$ is a subset $S \subseteq \Gamma$ such that $|f(x)|>0$ for all $x$ in the interior of $S$, and $f = 0$ on $\d S$. 
The pair $(\L,\nu)$ is said to be non-oscillatory on $\Gamma$ if no solution $u\in \Dom(\L)$ to $\L[u] = \nu q u$ has an S-zone in $\Gamma$. In particular, if $(\L,\nu)$ is non-oscillatory on $\Gamma$ there cannot be a sign-constant solution to problem \refp{spectralDir}.
\end{definition}

\begin{theorem}
Let $\nu^*(\Gamma) = \min_{e\in \Gamma} \nu^*(e)$. If $\nu < \nu^*(\Gamma)$, then $(\L,\nu^*(\Gamma))$ is non-oscillatory.
\end{theorem}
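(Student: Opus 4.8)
The plan is to argue by contradiction using the variational structure of $\L$. Suppose $\nu<\nu^*(\Gamma)$ yet some solution $u\in\Dom(\L)$ of $\L[u]=\nu q u$ possesses an S-zone $S$; replacing $u$ by $-u$ if necessary, assume $u>0$ in the interior of $S$ and $u=0$ on $\d S$. The reason $\nu^*(e)$ is the natural threshold becomes visible after the edgewise substitution $u_e=e^{a_e x}w_e$, which turns $\L[u]=\nu q u$ into $w_e''=\kappa_e^2 w_e$ with $\kappa_e^2=\frac{\sigma}{D_e}(\nu^*(e)-\nu)>0$: each edge equation is then of genuinely non-oscillatory (hyperbolic) type, and the theorem asserts that this edgewise property survives the passage to the network. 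Rather than track the discontinuities of $w$ across nodes, I would encode the same information in an energy inequality on $S$.

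First I would record the energy identity. Multiplying $\L[u]=\nu q u$ by $u$, integrating against $\ud AD$ over $S$, and integrating by parts edge by edge, the boundary contributions at each internal node $\e\in I(S)$ collect into $p(\e)\,u(\e)\,\derivAD{u}{}(\e)$, which vanishes because $u\in\Dom(\L)$, while the contributions on $\d S$ vanish since $u=0$ there. This gives
\[
\int_S p\,|u'|^2\ud AD=(\nu-1)\int_S q\,u^2\ud AD.
\]
The crux is the matching lower bound of Poincar\'e type,
\[
\int_S p\,|u'|^2\ud AD\;\geq\;(\nu^*(\Gamma)-1)\int_S q\,u^2\ud AD,
\]
which I would obtain by completing the square on each edge. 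Since $\frac{A_e}{D_e}\int_e p_e\bigl(D_e u_e'-\tfrac{v_e}{2}u_e\bigr)^2\ud x\geq 0$ and $p_e'=-\tfrac{v_e}{D_e}p_e$, integrating the cross term $\int_e p_e u_e u_e'\ud x$ by parts produces the bulk term $\sum_e A_e\frac{v_e^2}{4D_e}\int_e p_e u_e^2\ud x$ together with node terms $\tfrac12\sum_e A_e v_e\,[p_e u_e^2]_0^{l_e}$. This is where the hydrology enters: grouping the node terms at a fixed $\e\in I(S)$ (using continuity of $u$ and $p$) produces the factor $A_e v_e-A_{e0}v_{e0}-A_{e1}v_{e1}$, which is \emph{exactly zero} by the conservation of water flux \refp{consWater}; hence all internal node terms cancel, the terms on $\d S$ vanish, the surviving upstream-leaf terms $+\tfrac12 A_e v_e p(\e)u(\e)^2$ are nonnegative, and the outlet term vanishes because the S-zone terminates at the absorbing node $\bphi$. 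Since $\frac{v_e^2}{4D_e}\geq\sigma(\nu^*(\Gamma)-1)$ for every edge (as $\nu^*(\Gamma)-1=\min_{e}\tfrac{v_e^2}{4D_e\sigma}$) and $q_e=\sigma p_e/D_e$, the bulk term dominates $(\nu^*(\Gamma)-1)\int_S q\,u^2\ud AD$, giving the claimed inequality.

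Combining the identity with the inequality and dividing by $\int_S q\,u^2\ud AD>0$ (positive because $q>0$ and $u\not\equiv 0$ on $S$) yields $\nu\geq\nu^*(\Gamma)$, contradicting $\nu<\nu^*(\Gamma)$. I expect the only delicate point to be the bookkeeping of the node terms in the completing-the-square step: one must check that the quadratic boundary contributions are precisely the ones annihilated by \refp{consWater}, and that the single graph-boundary term at the outlet carries the right sign, i.e. vanishes because the relevant (Dirichlet) solutions vanish at $\bphi$. Everything else is the standard Rayleigh-quotient comparison already exploited in Theorem \refp{ThmVariational}.
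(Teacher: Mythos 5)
Your proof is correct, but it takes a genuinely different route from the paper's. The paper invokes Pokornyi's positive-solution criterion (its Theorem 4.2: non-oscillation of $(\L,\nu)$ follows once $\L[u]=\nu q u$ admits a strictly positive solution in $\Dom(\L)$) and then \emph{constructs} such a solution, propagating the coefficients of $u_e=B_e e^{\a_e x}+C_e e^{\b_e x}$ upward from the root and using the water-flux relation \refp{consWater} to satisfy $\derivAD{u}{}(\e)=0$ while keeping $B_e>|C_e|$, hence $u>0$. You instead argue by contradiction through an energy identity plus a Poincar\'e-type bound obtained by completing the square $\int_e p_e\bigl(D_e u_e'-\tfrac{v_e}{2}u_e\bigr)^2\ud x\geq 0$; your computation checks out: with $p_e'=-\frac{v_e}{D_e}p_e$ the cross term produces exactly the node terms you list, the internal-node contributions carry the factor $A_ev_e-A_{e0}v_{e0}-A_{e1}v_{e1}=0$ by \refp{consWater}, and since $\frac{A_ev_e^2}{4D_e}\int_e p_eu_e^2\ud x=(\nu^*(e)-1)\int_e qu^2\ud AD$, combining with the identity $\int_S p|u'|^2\ud AD=(\nu-1)\int_S qu^2\ud AD$ forces $\nu\geq\nu^*(\Gamma)$ whenever an S-zone exists. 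Notably both proofs hinge on the same structural fact --- conservation of water flux annihilates the internal node terms --- but yours is quantitative and self-contained: it avoids citing Pokornyi's Theorem 4.2, treats all $\nu<\nu^*(\Gamma)$ uniformly (the paper's construction is written for $1<\nu<\nu^*(\Gamma)$), and, applied to the first Dirichlet eigenfunction of $\tilde\Gamma$ (whose S-zone is all of $\tilde\Gamma$), it would even give $\eta_1(\tilde\Gamma)\geq\nu^*(\Gamma)$ directly, shortcutting the paper's later appeal to the sup-characterization of $\eta_1$ (Pokornyi's Theorem 5.2.2). What the paper's route buys in exchange is an explicit positive solution, linking the result to classical disconjugacy theory.

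One caveat you should repair: the outlet term. Solutions entering the definition of non-oscillation lie only in $\Dom(\L)$, so the absorbing condition $u(\bphi)=0$ (which belongs to $B_H$, not to $\Dom(\L)$) is not available, and ``the S-zone terminates at the absorbing node'' is not by itself a reason for $-\tfrac12 A_rv_r p(\bphi)u(\bphi)^2$ to vanish. What actually saves the step is the convention, implicit in the paper and standard in Pokornyi's oscillation theory, that graph-boundary nodes contained in a zone belong to $\d S$, so $u$ vanishes there \emph{by the definition of an S-zone}. This convention is forced: under a purely topological reading of $\d S$ relative to $\Gamma$, already on a single edge any solution with one interior zero $x_2$ would make $[0,x_2]$ an S-zone, every $(\L,\nu)$ would be oscillatory, and the paper's remark that non-oscillation forbids sign-constant solutions of \refp{spectralDir} would fail. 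The same convention makes your upstream-leaf terms vanish outright, so your nonnegativity hedge there, while correct in sign and harmless, is not needed. With the definition so read, both your energy identity (boundary terms $A_eD_ep\,u'u$ at nodes of $\d\Gamma$ inside $S$) and your square-completion bookkeeping are sound, and the proof stands.
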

\begin{proof}
By theorem 4.2 in \cite{Pokornyi:2004p3313}, it suffices to show that for all $\nu < \nu^*(\Gamma)$, there exists an strictly positive solution $u \in \Dom(\L)$ to $\L[u] = \nu qu$. Let $1<\nu < \nu^*(\Gamma)$, then $0<\alpha_e < \beta_e$ are real numbers for all $e \in \Gamma$. In particular if $B_e>|C_e|$ then $u_e(x;\nu) = B_e e^{\a_e x} + C_e e^{\b_e x} > 0$ for all $x\geq 0$. Assume $e$ is not a leaf edge, and has children edges $e0$ and $e1$. It suffices to show that if $B_e>|C_e|$, there exist choices of $B_{e0},C_{e0},B_{e1},C_{e1}$  such that $u(\cdot,\nu)$ remains positive and satisfies the continuity and matching conditions at node $\e$. Using the continuity of $u$ at $\e$ and the conservation of water equation \refp{consWater}, gives $\derivAD{u}{}(\e) = 0$ if and only if
$$b_{e0}(B_{e0}-C_{e0}) + b_{e1}(B_{e1}-C_{e1}) = b_e(B_e e^{\a_e l_e}-C_e e^{\b_e l_e}).$$
Since $B_e>|C_e|$, then $K:=B_e e^{\a_e l_e}-C_e e^{\b_e l_e} >0$. Let $c_{e0} \in (0,1)$, $c_{e1} := 1-c_{e0}$ and choose 
$$B_{ei} = \frac{1}{2}\left(u(\e;\nu) + c_{ei} \frac{K}{b_{ei}}\right), \quad  
  C_{ei} = u(\e;\nu) - B_{ei} $$
Then $u_e(l_e;\nu) = u_{e0}(0;\nu)=u_{e1}(0;\nu) = u(\e;\nu)$ and $\derivAD{u}{}(\e) = 0$. Moreover
$$|C_{ei}| = \frac{1}{2}\left|u(\e;\nu) - c_{ei} \frac{K}{b_{ei}}\right|<B_{ei},$$
and hence $u_{ei}(\cdot,\nu)$ is positive, $i=0,1$.  
\end{proof}

The proof of theorem \refp{ThmLowBound} follows from the following characterization of the smallest eigenvalue for problem \refp{spectralDir}: $\eta_1(\Gamma)$ is the supremum of real $\eta$ for which $(\L,\eta)$ is non-oscillatory \cite[theorem 5.2.2]{Pokornyi:2004p3313}. Since $\nu^*(\Gamma) = \nu^*(\tilde \Gamma)$, then $(\tilde\L,\nu^*(\Gamma))$ is non-oscillatory on $\tilde \Gamma$, and therefore $\nu^*(\Gamma) < \eta_1(\tilde\Gamma)$. Lemma \refp{LemmaExtend} gives $\nu^*(\Gamma) < \nuu(\Gamma)$.

\section{Discussion}
Corollary \refp{CorStab1D} and theorem \refp{ThmLowBound} give necessary conditions for the imminent extinction of populations in river networks under advection-diffusion. Only a very particular dispersion mechanism is considered: mobile individuals follow realizations of an advection-diffusion process inside $\Gamma$ for an exponential time of mean $\frac{1}{\sigma}$. The boundary conditions are also fixed to be absorbent at the outlet, and reflecting at upstream boundary points. 

Although the stability properties of equation \refp{IDE} may be easily generalized to dispersion kernels with different sets of boundary conditions (see problem \refp{spectralDir} for instance), the methods used here depend heavily on the relation between $\K$ and the Sturm-Liouville operator $\L$, and in particular to the flux condition $\derivAD{f}{}=0$ at internal nodes. Generalization of the persistence conditions to other family of dispersion kernels does not seem feasible. On the other hand, the particular form of $\K$ used here has the conceptual advantage of being derived from microscopical mechanistic considerations, namely those based on the theory of diffusion processes on graphs developed by \cite{Freidlin:1993p4685, Freidlin:2000vn}. 

The analysis of the one-dimensional case of section \refp{Section1d} confirms the results reported on \cite{Lutscher:2005p1894} even when boundary conditions are imposed. Namely, upstream dispersal events in linear advection-diffusion paths, are likely enough to ensure the existence of persistence scenarios, even for high values of the water velocity. 

Theorem \refp{ThmSubtree} hints at the benefits that living on a network brings to organisms susceptible of being washed out of it. In particular, it shows that the population can persist in a river network if upstream refuges are in place, namely subgraphs $\Gamma_e$ of $\Gamma$ where $\rcrit(\Gamma_e)$ is low. The detailed effects of the topology of $\Gamma$ on $\nuu(\Gamma)$ are much harder to track. One reason for this is that the physical variables involved in the dispersal kernel are also dependent on this topology, e.g. the conservation of water equation \refp{consWater}. 

The bounds for $\frac{\rcrit}{\mu}$ in \refp{upbound} might be useful because of their simplicity, but are rather crude. Further work is needed to unravel the relationship between the smallest eigenvalue $\nuu(\Gamma)$ and the physical variables on subgraphs and individual edges of $\Gamma$. Also, river networks in nature exhibit multiple and well-documented scaling properties between channel indexing schemes (e.g. Horton order) and physical variables like channel length, mean velocity, and cross-sectional area \citep[see][]{Rodriguez-Iturbe:1996fk}. Deriving the consequences that such properties have on the estimation of extinction/persistence conditions will prove much useful.

\bibliographystyle{spbasic}  
\bibliography{biblio}

\end{document}